\begin{document}
%\hyphenation{pro-blems sol-ving complexi-ty dia-gram poly-gonal analy-sis ins-tance des-cribed}

\pagestyle{headings}  % switches on printing of running heads
\addtocmark{                    }

\mainmatter              % start of the contributions
\title{Synergistic Sorting, MultiSelection and Deferred Data Structures on MultiSets}

\titlerunning{Synergistic Sorting, MultiSelection and Deferred Data Structures on MultiSets}
\author{
  J\'er\'emy Barbay\inst{1} 
  \and 
  Carlos Ochoa\inst{1}
  \and
  Srinivasa Rao Satti\inst{2}
}

%% \authorrunning{J\'er\'emy Barbay, Carlos Ochoa, Pablo Pablo P\'erez-Lantero}   % abbreviated author list (for running head)
%% %
%% %%%% modified list of authors for the TOC (add the affiliations)
%% \tocauthor{J\'er\'emy Barbay (Universidad de Chile), Carlos Ochoa (Universidad de Chile), Pablo P\'erez-Lantero (Universidad de Valpara\'iso)}
%% %
\institute{
  Departamento de Ciencias de la Computaci\'on, Universidad de Chile, Chile\\ 
  \email{jeremy@barbay.cl, cochoa@dcc.uchile.cl}
  \and
  Department of Computer Science and Engineering, Seoul National University, South Korea\\
  \email{ssrao@cse.snu.ac.kr}
}

\maketitle              % typeset the title of the contribution

\begin{abstract}
  Karp et al. (1988) described Deferred Data Structures for Multisets
  as ``lazy'' data structures which partially sort data to support
  online rank and select queries, with the minimum amount of work in
  the worst case over instances of size $n$ and number of queries $q$
  fixed (i.e., the query size). Barbay et al. (2016) refined this
  approach to take advantage of the gaps between the positions hit by
  the queries (i.e., the structure in the queries). We develop new
  techniques in order to further refine this approach and to take
  advantage all at once of the structure (i.e., the multiplicities of
  the elements), the local order (i.e., the number and sizes of runs)
  and the global order (i.e., the number and positions of existing
  pivots) in the input; and of the structure and order in the sequence
  of queries. Our main result is a synergistic deferred data structure
  which performs much better on large classes of instances, while
  performing always asymptotically as good as previous solutions. As
  intermediate results, we describe two new synergistic sorting
  algorithms, which take advantage of the structure and order (local
  and global) in the input, improving upon previous results which take
  advantage only of the structure (Munro and Spira 1979) or of the
  local order (Takaoka 1997) in the input; and one new multiselection
  algorithm which takes advantage of not only the order and structure
  in the input, but also of the structure in the queries.
\begin{VLONG} We described two compressed data structures to represent a multiset taking advantage of both the local order and structure, while supporting the operators \texttt{rank} and \texttt{select} on the multiset.\end{VLONG}
\end{abstract}

\begin{center}
  \begin{minipage}{.9\textwidth}
    \noindent{\bf Keywords:} Deferred Data Structure, Divide and Conquer, Fine Grained Analysis, Quick Select, Quick Sort, Compressed Data Structure, Rank, Select.
  \end{minipage}
\end{center}

\section{Introduction}
\label{sec:intro}

%% The problem
% Multiset, rank, select.
Consider a \emph{multiset} $\mathcal{M}$ of size $n$ (e.g.,
$\mathcal{M}=\{1,2,3,3,4,5,6,7,8,9\}$ of size $n=10$).
%
% Frequency, Input Structure.
The \emph{multiplicity} of an element $x$ of $\mathcal{M}$ is the
number $m_x$ of occurrences of $x$ in $\mathcal{M}$ (e.g.,
  $m_3=2$). We call the distribution of the multiplicities
of the elements in $\mathcal{M}$ the \emph{input
  structure}\begin{LONG}
  , and denote it by a set of pairs $(x,m_x)$ (e.g., $\{(1,1)$,
  $(2,1)$, $(3,2)$, $(4,1)$, $(5,1)$, $(6,1)$, $(7,1)$, $(8,1)$,
  $(9,1)\}$ in $\mathcal{M}$)\end{LONG}.
As early as 1976, Munro and
Spira~\cite{1976-JComp-SortingAndSearchingInMultisets-MunroSpira}
described a variant of the algorithm {\tt{MergeSort}} using counters,
which optimally takes advantage of the input structure
\begin{LONG}
  (i.e., the multiplicities of the distinct elements)
\end{LONG}
when sorting a multiset $\mathcal{M}$ of $n$ elements. Munro and Spira
measure
the ``difficulty'' of the instance in terms of the ``input structure''
by the entropy function
$\mathcal{H}(m_1, \dots, m_\sigma) =
\sum_{i=1}^\sigma{\frac{m_i}{n}}\log{\frac{n}{m_i}}$. The time
complexity of the algorithm is within
$O(n(1 + \mathcal{H}(m_1, \dots, m_\sigma))) \subseteq
O(n(1{+}\log{\sigma})) \subseteq O(n\log{n})$, where $\sigma$ is the
number of distinct elements in $\mathcal{M}$ and
$m_1, \dots, m_\sigma$ are the multiplicities of the $\sigma$ distinct
elements in $\mathcal{M}$ (such that $\sum_{i=1}^\sigma {m_i}=n$),
respectively.

%% Local Input Order, Runs
Any array $\mathcal{A}$ representing a multiset lists its element in
some order, which we call the \emph{input order} and denote by a tuple
(e.g., $\mathcal{A}=(2,3,1,3,7,8,9,4,5,6)$). Maximal sorted subblocks
in $\mathcal{A}$ are a local form of input order and are called
\emph{runs}~\cite{1973-BOOK-TheArtOfComputerProgrammingVol3-Knuth}
(e.g., $\{(2,3)$, $(1,3,7,8,9)$, $(4,5,6)\}$ in $\mathcal{A}$).
As early as 1973,
Knuth~\cite{1973-BOOK-TheArtOfComputerProgrammingVol3-Knuth} described
a variant of the algorithm {\tt{MergeSort}} using a prepossessing step
taking linear time to detect \emph{runs} in the array $\mathcal{A}$\begin{LONG}
  , which he named \texttt{Natural MergeSort}.
  Mannila~\cite{1985-TCom-MeasuresOfPresortednessAndOptimalSortingAlgorithms-Mannila}
  refined the analysis of the \texttt{Natural MergeSort} algorithm to
  yield a time complexity for sorting an array $\mathcal{A}$ of size $n$
  in time within $O(n(1+\log\rho))\subseteq O(n\lg n)$, where $\rho$
  is the number of \emph{runs} in the $\mathcal{A}$\end{LONG}.
Takaoka~\cite{2009-Chapter-PartialSolutionAndEntropy-Takaoka}
described a new sorting algorithm that optimally takes advantage of the distribution of
the sizes of the runs in the array $\mathcal{A}$, which yields a time complexity within
$O(n(1+\mathcal{H}(r_1, \dots, r_{\rho}))) \subseteq
O(n(1{+}\log{\rho})) \subseteq O(n\log{n})$, where $\rho$ is the
number of runs in $\mathcal{A}$ and $r_1, \dots, r_{\rho}$ are the sizes
of the $\rho$ \emph{runs} in $\mathcal{A}$ (such that
$\sum_{i=1}^\rho {r_i}=n$), respectively.
\begin{LONG}
  It is worth noting that in 1997,
  Takaoka~\cite{1997-TR-MinimalMergesort-Takaoka} first described this
  algorithm in a technical report.
\end{LONG}

%% Rank and Select Queries
Given an element $x$ of a multiset $\mathcal{M}$ and an integer
$j\in[1..n]$\begin{LONG}
  (e.g., $\mathcal{M}=\{1,2,3,3,4,5,6,7,8,9\}$, $x=3$ and $j=4$)
\end{LONG}, the \emph{rank} $\mathtt{rank}(x)$ of $x$ is the number of elements
smaller than $x$ in $\mathcal{M}$,
\begin{LONG}
  (e.g., $\mathtt{rank}(3)=2$)
\end{LONG}
and \emph{selecting} the $j$-th element in $\mathcal{M}$ corresponds
to computing the value $\mathtt{select}(j)$ of the $j$-th smallest
element (counted with multiplicity) in $\mathcal{M}$\begin{LONG}
  (e.g., $\mathtt{select}(4)=3$)\end{LONG}.
\begin{INUTILE}
  The support of \texttt{rank} and \texttt{select} queries in
  multisets is related to the \textsc{Sorting} problem, as sorting the
  array $\mathcal{A}$ representing $\mathcal{M}$ permits supporting
  each \texttt{select} query in constant time and each \texttt{rank}
  query in $\lceil\log_2 n\rceil$ comparisons. What makes
  \textsc{Sorting} and supporting \texttt{rank} and \texttt{select}
  queries distinct is that when there are only a few queries (or when
  the multiset is dynamically updated), sorting the whole multiset is
  overkill and there are much better solutions.
\end{INUTILE}
%
%% Single Select
As early as 1961, Hoare~\cite{1961-CACM-Quickselect-Hoare} showed how
to support \texttt{rank} and \texttt{select} queries in average linear
time, a result later improved to worst case linear time by Blum et
al.~\cite{1973-JCSS-TimeBoundsForSelection-BlubFloydPrattRivestTarjan}.
%% Multiselect (Offline)
Twenty years later, Dobkin and
Munro~\cite{1981-JACM-OptimalTimeMinimalSpaceSelectionAlgorithms-DobkinMunro}
described a \textsc{MultiSelection} algorithm that supports several
\texttt{select} queries
\begin{LONG}
  in parallel
\end{LONG}
and whose running time is optimal in the worst
case over all multisets of size $n$ and all sets of $q$ queries hitting
positions in the multisets separated by \emph{gaps} (differences
between consecutive \texttt{select} queries in sorted order) of sizes
$g_0, \dots, g_q$.
\begin{LONG}
  Kaligosi et
  al.~\cite{2005-ICALP-TowardsOptimalMultopleSelection-KaligosiMehlhornMunroSanders}
  later described a variant of this algorithm which number of
  comparisons performed is within a negligible additional term of the
  optimal.
\end{LONG}
%
%% Deferred Data Structure (Online)
\begin{LONG}
  In the online context where the queries arrive one at a time,
\end{LONG}
Karp et
al.~\cite{1988-SIAM-DeferredDataStructuring-KarpMotwaniRaghavan}
further extended Dobkin and Munro's result~\cite{1981-JACM-OptimalTimeMinimalSpaceSelectionAlgorithms-DobkinMunro}\begin{SHORT}
  to the online context, where the multiple \texttt{rank} and
  \texttt{select} queries arrive one by one\end{SHORT}. Karp et
al. called their solution a \textsc{Deferred Data Structure} and
describe it as ``lazy'', as it partially sorts data, performing the
minimum amount of work necessary in the worst case over all instances
for a fixed $n$ and $q$.
% of size $n$ and number $q$ of queries fixed,
Barbay et
al.~\cite{2016-JDA-NearOptimalOnlineMultiselectionInInternalAndExternalMemory-BarbayGuptaRaoSorenson}
refined this result by taking advantage of the gaps between the
positions hit by the queries (i.e., the \emph{query structure}).
\begin{INUTILE}
  Karp et al.'s
  approach~\cite{1988-SIAM-DeferredDataStructuring-KarpMotwaniRaghavan}
  is a variant of
  \texttt{MergeSort}~\cite{1973-BOOK-TheArtOfComputerProgrammingVol3-Knuth}
  while Barbay et al.'s
  approach~\cite{2016-JDA-NearOptimalOnlineMultiselectionInInternalAndExternalMemory-BarbayGuptaRaoSorenson}
  is based on \texttt{QuickSort}~\cite{1961-CACM-Quicksort-Hoare}.
\end{INUTILE}
\begin{INUTILE}
  The support for \texttt{rank} and \texttt{select} operators is
  tightly related to the task of \textsc{Sorting} or \textsc{Partially
    Sorting} multisets. Yet there are many approaches to sorting which
  take advantage of easy instances, for various definitions of
  ``easy''. We focus on two particular ones based on
  \texttt{MergeSort} and orthogonal notions of ``easiness'':
  \emph{input order} and \emph{input structure}:
\end{INUTILE}
%
%% The questions
This suggests the following questions:
\textbf{\begin{enumerate}
  \item Is there a sorting algorithm for multisets which takes the
    best advantage of both its \emph{input order} and its \emph{input
      structure} in a synergistic way, so that it performs as good as
    previously known solutions on all instances, and much better on
    instances where it can take advantage of both at the same time?
  \item Is there a multiselection algorithm and a deferred data structure
    for answering \texttt{rank} and \texttt{select} queries which
    takes the best advantage not only of both of those notions of
    easiness in the input, but additionally also of notions of
    easiness in the queries, such as the \emph{query structure} and
    the \emph{query order}?
  \end{enumerate}
}

%% Our Results
We answer both questions affirmatively: 
In the context of \textsc{Sorting}, this improves upon both algorithms from Munro and Spira~\cite{1976-JComp-SortingAndSearchingInMultisets-MunroSpira} and Takaoka~\cite{2009-Chapter-PartialSolutionAndEntropy-Takaoka}. 
In the context of \textsc{MultiSelection} and \textsc{Deferred Data
  Structure} for \texttt{rank} and \texttt{select} on
\textsc{Multisets}, this improves upon Barbay et al.'s
results~\cite{2016-JDA-NearOptimalOnlineMultiselectionInInternalAndExternalMemory-BarbayGuptaRaoSorenson}
by adding 3 new measures of difficulty (input order, input structure
and query order) to the single one previously considered (query
structure). Even though the techniques used by our algorithms are
known, the techniques used to refine the analysis of these algorithms
to show that they improve the state of the art are complex.
Additionally, 
we correct the analysis of the \texttt{Sorted Set Union} algorithm by Demaine et al.~\cite{2000-SODA-AdaptiveSetIntersectionsUnionsAndDifferences-DemaineLopezOrtizMunro} (Section~\ref{sec:dlm-sorting}), and
we define a simple yet new notion of ``global'' input order (Section~\ref{sec:global})\begin{LONG}, formed by the number of preexisting pivot positions in the input (e.g. $(3,2,1,6,5,4)$ has one pre-existing pivot position in the middle)\end{LONG}, not  mentioned in previous surveys~\cite{1992-ACMCS-ASurveyOfAdaptiveSortingAlgorithms-EstivillCastroWood,1992-ACJ-AnOverviewOfAdaptiveSorting-MoffatPetersson} nor extensions \cite{2013-TCS-OnCompressingPermutationsAndAdaptiveSorting-BarbayNavarro}.

We present our results incrementally, each building on the previous
one, such that the most complete and complex result is in Section~\ref{sec:online-synergy-defer}. In Section~\ref{sec:synergy-sorting} we describe how to measure the interaction of the order (local and global) with the structure in the input, and two new synergistic \textsc{Sorting} algorithms based on distinct paradigms (i.e., merging vs splitting) which take advantage of both the input order and structure\begin{LONG} in order to sort the multiset in less time than traditional solutions based on one of those, at most\end{LONG}. We refine the second of those results in Section~\ref{sec:synergy-deferr-data} with the analysis of a \textsc{MultiSelection} algorithm which takes advantage of not only the order and structure in the input, but also of the \emph{query structure}, in the offline setting. In Section~\ref{sec:online-synergy-defer} we analyze an online \textsc{Deferred Data Structure} taking advantage of the order and structure in the input on one hand, and of the order and structure in the queries on the other hand\begin{LONG}, in a synergistic way\end{LONG}.\begin{VLONG} As an additional result, we describe in Section~\ref{sec:compressed} two compressed data structures to represent a multiset taking advantage of both the input order and structure, while supporting the operators \texttt{rank} and \texttt{select} on the multiset.\end{VLONG} We conclude with a discussion of our results in Section~\ref{sec:discussion}.

\section{Sorting Algorithms}\label{sec:synergy-sorting}

We review in Section~\ref{sec:review} the algorithms \texttt{MergeSort
  with Counters} described by Munro and
Spira~\cite{1976-JComp-SortingAndSearchingInMultisets-MunroSpira} and
\texttt{Minimal MergeSort} described by
Takaoka~\cite{2009-Chapter-PartialSolutionAndEntropy-Takaoka}, each
takes advantage of distinct features in the input.\begin{LONG} In
  Section~\ref{sec:comp-betw-algor} we show that the algorithm
  \texttt{MergeSort with Counters} is incomparable with the algorithm
  \texttt{Minimal MergeSort}, in the sense that none performs always
  better than the other (by describing families of instances where
  \texttt{MergeSort with Counters} performs better than
  \texttt{Minimal MergeSort} on one hand, and families of instances
  where the second one performs better than the first one on the other
  hand), and some simple modifications and combinations of these
  algorithms, which still do not take full advantage of both the order
  (local and global) and structure in the input.\end{LONG} In Sections
\ref{sec:dlm-sorting} and \ref{sec:ttqu-synergy-sort}, we describe two
synergistic \textsc{Sorting} algorithms, which never perform worse
than \begin{SHORT}\texttt{MergeSort with Counters} and \texttt{Minimal MergeSort}\end{SHORT}\begin{LONG} the algorithms presented in Section~\ref{sec:review} and Section~\ref{sec:comp-betw-algor}\end{LONG}, and perform much better on some large classes of instances by taking advantage of both the order (local and global) and the structure in the input, in a synergistic way.

\begin{LONG}
  \subsection{Previously Known Input-structure and Local Input-order
    Algorithms}\label{sec:review}
\end{LONG}

\begin{SHORT}
  \subsection{Known  Algorithms}\label{sec:review}
\end{SHORT}

The algorithm \texttt{MergeSort with Counters} described by Munro and
Spira~\cite{1976-JComp-SortingAndSearchingInMultisets-MunroSpira} is
an adaptation of the traditional sorting algorithm \texttt{MergeSort}
that optimally takes advantage of the structure in the input when
sorting a multiset $\mathcal{M}$ of size $n$. The algorithm divides
$\mathcal{M}$ into two equal size parts, sorts both parts recursively,
and then merges the two sorted lists. When two elements of same value $v$ are
found, one is thrown away and a counter holding the number of occurrences
of $v$ is updated. Munro and Spira measure the
``difficulty'' of the instance in terms of the ``input structure'' by
the entropy function
$\mathcal{H}(m_1, \dots, m_\sigma) =
\sum_{i=1}^\sigma{\frac{m_i}{n}}\log{\frac{n}{m_i}}$.  The time
complexity of the algorithm is within
$O(n(1 + \mathcal{H}(m_1, \dots, m_\sigma))) \subseteq
O(n(1{+}\log{\sigma})) \subseteq O(n\log{n})$, where $\sigma$ is the
number of distinct elements in $\mathcal{M}$ and
$m_1, \dots, m_\sigma$ are the multiplicities of the $\sigma$ distinct
elements in $\mathcal{M}$ (such that $\sum_{i=1}^\sigma {m_i}=n$),
respectively.

The algorithm \texttt{Minimal MergeSort} described by
Takaoka~\cite{2009-Chapter-PartialSolutionAndEntropy-Takaoka}
optimally takes advantage of the local order in the input, as measured
by the decomposition into runs when sorting an array $\mathcal{A}$ of
size $n$.  The main idea is to detect the runs first and then merge
them pairwise\begin{LONG}, using a \texttt{MergeSort}-like
  step\end{LONG}. The runs are detected in linear time\begin{LONG} by
  a scanning process identifying the positions $i$ in $\mathcal{A}$
  such that $\mathcal{A}[i] > \mathcal{A}[i+1]$\end{LONG}. Merging the
two shortest runs at each step further reduces the number of
comparisons, making the running time of the merging process adaptive
to the entropy of the sequence of the sizes of the runs.
\begin{LONG}
  The merging process is then represented by a tree with the shape of
  a
  Huffman~\cite{1952-IRE-AMethodForTheInstructionOfMinimumRedundancyCodes-Huffman}
  tree, built from the distribution of the sizes of the \emph{runs}.
\end{LONG}
If the array $\mathcal{A}$ is formed by $\rho$ runs and
$r_1, \dots, r_{\rho}$ are the sizes of the $\rho$ runs (such that
$\sum_{i=1}^\rho {r_i}=n$), then the algorithm sorts $\mathcal{A}$ in time within
$O(n(1+\mathcal{H}(r_1, \dots, r_{\rho}))) \subseteq
O(n(1{+}\log{\rho})) \subseteq O(n\log{n})$.

\begin{SHORT}
The algorithm \texttt{MergeSort with Counters} is incomparable with
the algorithm \texttt{Minimal MergeSort}, in the sense that neither
one performs always better than the other. Simple modifications and
combinations of these algorithms do not take full advantage of both
the order (local and global) and structure in the input (see the
extended
version~\cite{2016-ARXIV-SynergisticSortingAndDeferredDataStructuresOnMultiSets-BarbayOchoaSatty}
for detailed counter examples).
\end{SHORT}

\begin{LONG}
  \subsection{Comparison Between Sorting
    Algorithms}\label{sec:comp-betw-algor}

  \paragraph*{Example 1: $\left\langle 1,2,1,2, \dots,1,2 \right\rangle$}~\\

  On this family of instances, the algorithm \texttt{Minimal
    MergeSort} detects $\frac{n}{2}$ runs that merges two at a
  time. The time complexity of \texttt{Minimal MergeSort} is within
  $\Theta(n\log{n})$. On the other hand, the algorithm
  \texttt{MergeSort with Counters} recognizes the multiplicity of the
  elements with values $1$ and $2$. In every merging step, the
  resulting set is always $\{1,2\}$. Therefore, the number of elements
  is reduced by half at each step, which yields a complexity linear in
  the size of the instance for sorting such instances.

  \paragraph*{Example 2: $\langle 1,2, \dots, n \rangle$}~\\

  On this family of instances, the algorithm \texttt{Minimal
    MergeSort} detects in time linear in the size $n$ of the input
  that the sequence is already sorted, and in turn it finishes. On the
  other hand, the running time of the algorithm \texttt{MergeSort with
    Counters} is within $\Theta(n\log{n})$ because this algorithm
  would perform at least as many operations as {\tt{MergeSort}}.

  The algorithm \texttt{Parallel Minimal Counter MergeSort} runs both
  algorithms in parallel, when one of these algorithms manages to sort
  the sequence, the algorithm {\tt{Parallel Minimal Counter MergeSort}} returns the
  sorted sequence and then finishes. The time complexity of this
  algorithm in any instance is twice the minimum of the complexities
  of \texttt{Minimal MergeSort} and \texttt{MergeSort with Counters}
  for this instance. The algorithm \texttt{Parallel Minimal Counter
    MergeSort} needs to duplicate the input in order to run both
  algorithms in parallel.

  Combining the ideas of identifying and merging runs from
  Takaoka~\cite{2009-Chapter-PartialSolutionAndEntropy-Takaoka} with
  the use of counters by Munro and
  Spira~\cite{1976-JComp-SortingAndSearchingInMultisets-MunroSpira},
  we describe the {\tt{Small vs Small Sort}} algorithm to sort a
  multiset. It identifies the runs using the same linear scanning as
  \texttt{Minimal MergeSort} and associates counters to the elements
  in the same way that \texttt{MergeSort with Counters} does. Once the
  runs are identified, this algorithm initializes a heap with them
  ordered by sizes. At each merging step the two shorter runs are
  selected for merging and both are removed from the heap. The pair is
  merged and the resulting run is inserted into the heap. The process
  is repeated until only one run is left and the sorted sequence is
  known. The {\tt{Small vs Small Sort}} algorithm is adaptive to the
  sizes of the resulting runs in the merging process.

  The time complexity of {\tt{Small vs Small Sort}} for all instances
  is within a constant factor of the time complexity of {\tt{Parallel
      Minimal Counter MergeSort}}; but the next example shows that
  there are families of instances where {\tt{Small vs Small Sort}}
  performs better than a constant factor of the time complexity of
  {\tt{Parallel Minimal Counter MergeSort}}.

  \paragraph*{Example 3: $\langle 1,2, \dots, \sigma,1,2, \dots, \sigma \dots, 1,2,\dots,\sigma \rangle$}~\\

  In this family of instances, there are $\rho$ runs, each of size
  $\sigma$. The complexity of \texttt{MergeSort with Counters} for
  this instance is within $\Theta(\rho\sigma\log{\sigma})$, while the
  complexity of \texttt{Minimal MergeSort} for this instance is within
  $\Theta(\rho\sigma\log{\rho})$. On the other hand, the complexity
  {\tt{Small vs Small Sort}} for this instance is better, within
  $\Theta(\rho\sigma)$: at each level of the binary tree representing
  the merging order, the sum of the sizes of the runs is halved.

  Even though {\tt{Small vs Small Sort}} is adaptive to the sizes of
  the resulting runs, it does not take advantage of the fact that
  there may exist a pair of runs that can be merged very quickly, but
  it needs linear time in the sum of the sizes to merge one of them
  when paired with another run of the same size. We show this
  disadvantage in the Example $4$ at the end of the next section.
\end{LONG}

In the following sections we describe two sorting algorithms that take
the best advantage of both the order (local and global) and structure
in the input all at once when sorting a multiset. The first one is a
straightforward application of previous results, while the second one
prepares the ground for the \textsc{MultiSelection} algorithm (Section~\ref{sec:synergy-deferr-data}) and the
\textsc{Deferred Data Structures} (Section~\ref{sec:online-synergy-defer}), which take advantage of the order
(local and global) and structure in the data and of the order and
structure in the queries.

\subsection{``Kind-of-new'' Sorting Algorithm \texttt{DLM
    Sort}}\label{sec:dlm-sorting}

In 2000, Demaine et
al.~\cite{2000-SODA-AdaptiveSetIntersectionsUnionsAndDifferences-DemaineLopezOrtizMunro}
described the algorithm \texttt{DLM Union}, an instance optimal algorithm that
computes the union of $\rho$ sorted sets.
\begin{LONG}
  It inserts the smallest element of each set in a heap. At each step,
  it deletes from the heap all the elements whose values are equal to
  the minimum value of the heap. If more than one element is deleted,
  it knows the multiplicity of this value in the union of the sets. It
  then adds to the heap the elements following the elements of minimum
  value of each set that contained the minimum value. If there is only
  one minimum element, it extracts from the heap the second minimum
  and executes a doubling
  search~\cite{1976-IPL-AnAlmostOptimalAlgorithmForUnboundedSearching-BentleyYao}
  in the set where the minimum belongs for the value of the second
  minimum. Once it finds the insertion rank $r$ of the second minimum
  (i.e., number of elements smaller than the second minimum in the set
  that contains the minimum), it also knows that the multiplicity of
  all elements whose positions are before $r$ in the set that contain
  the minimum are 1 in the union of the $\rho$ sets.
  \begin{INUTILE}
    These elements are discarded from future iterations of the
    algorithm.
  \end{INUTILE}
The process is repeated until all elements
  are discarded.
\end{LONG}\begin{SHORT}The algorithm scans the sets from left to right identifying blocks of
  consecutive elements in the sets that are also consecutive in the
  sorted union (see Figure~\ref{fig:instance} for a graphical
  representation of such a decomposition on a particular
  instance of the \textsc{Set Union} problem). In a minor way we refine their analysis as follow:
  
\end{SHORT}
\begin{LONG}
  
  The time complexity of the \texttt{DLM Union} algorithm is measured
  in terms of the number and sizes of blocks of consecutive elements
  in the sets that are also consecutive in the sorted union (see
  Figure~\ref{fig:instance} for a graphical representation of such a
  decomposition on a particular instance of the \textsc{Sorted Set Union}
  problem). The sizes of these blocks are referred to as \emph{gaps}
  in the analysis of the algorithm.\end{LONG} These blocks induce a
partition $\pi$ of the output into intervals such that any singleton
corresponds to a value that has multiplicity greater than $1$ in the
input, and each other interval corresponds to a block as defined
above. Each member $i$ of $\pi$ has a value $m_i$ associated with it:
if the member $i$ of $\pi$ is a block, then $m_i$ is $1$, otherwise,
if the member $i$ of $\pi$ is a singleton corresponding to a value of
multiplicity $q$, then $m_i$ is $q$.
Let $\chi$ be the size of $\pi$.
If the instance is formed by $\delta$ blocks of sizes $g_1, \dots, g_{\delta}$ such that these blocks induce a partition $\pi$ whose members have values $m_1, \dots, m_{\chi}$, we express the time complexity of \texttt{DLM Union} as within $\Theta(\sum^{\delta}_{i=1}\log g_i + \sum^{\chi}_{i=1}\log{\binom{\rho}{m_i}})$.
This time complexity is within a constant factor of the complexity of any other algorithm computing the union of these sorted sets (i.e., the algorithm is instance optimal).

\begin{minipage}[c]{.45\textwidth}
  \centering
  \includegraphics[scale=1.2]{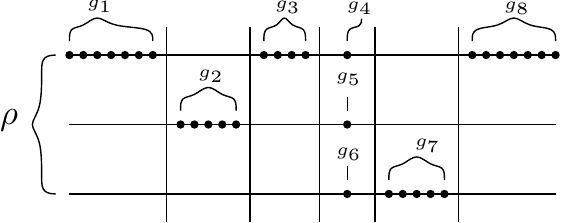}
\end{minipage}\hfill
\begin{minipage}[c]{.45\textwidth}
  \captionof{figure}{An instance of the \textsc{Sorted Set Union} problem
    with $\rho=3$ sorted sets. In each set, the entry $\mathcal{A}[i]$ is
    represented by a point of $x$-coordinate $\mathcal{A}[i]$. The blocks
    that form the sets are noted. The blocks $g_4,g_5$ and $g_6$ are of
    size 1 because they correspond to elements of equal value and they
    induce the $4$-th member of the partition $\pi$ with value $m_4$
    equals $3$. The vertical bars separate the members of $\pi$.}
  \label{fig:instance}
\end{minipage}

We adapt the \texttt{DLM Union} algorithm for sorting a multiset.  The algorithm \texttt{DLM Sort} detects the runs first through a linear scan and then applies the algorithm \texttt{DLM Union}. After that, transforming the output of the union algorithm to yield the sorted multiset takes only linear time. The following corollary follows from our refined analysis above:

\begin{corollary}
  Given a multiset $\mathcal{M}$ of size $n$ formed by $\rho$ runs and
  $\delta$ blocks of sizes $g_1, \dots, g_{\delta}$ such that these
  blocks induce a partition $\pi$ of size $\chi$ of the output whose
  members have values $m_1, \dots, m_{\chi}$, the algorithm {\tt{DLM Sort}} performs
  within
  $O(n + \sum^{\delta}_{i=1}\log g_i +
  \sum^{\chi}_{i=1}\log{\binom{\rho}{m_i}})$ data comparisons. This
  number of comparisons is optimal in the worst case over multisets of
  size $n$ formed by $\rho$ runs and $\delta$ blocks of sizes
  $g_1, \dots, g_{\delta}$ such that these blocks induce a partition
  $\pi$ of size $\chi$ of the output whose members have values
  $m_1, \dots, m_{\chi}$.
\end{corollary}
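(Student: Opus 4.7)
The plan is to split the proof into an upper-bound part and a matching worst-case lower-bound part. The upper bound reduces almost immediately to the composition of \texttt{DLM Sort} as ``linear run detection $+$ \texttt{DLM Union} $+$ linear output transformation'', while the lower bound requires an adversarial/information-theoretic argument that leverages the instance optimality of \texttt{DLM Union} recalled just above.

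For the upper bound, I would observe that detecting the $\rho$ runs of $\mathcal{M}$ by a single left-to-right scan for descents uses $n{-}1$ comparisons, contributing $O(n)$. The $\rho$ sorted sequences thus produced form exactly the input to \texttt{DLM Union}, and the refined analysis stated above bounds its comparison cost by $O\bigl(\sum_{i=1}^{\delta}\log g_i + \sum_{i=1}^{\chi}\log\binom{\rho}{m_i}\bigr)$. The final conversion from the union's block-based output to a sorted list of $n$ elements requires no further comparisons, only $O(n)$ time for bookkeeping. Summing the three contributions yields the claimed bound.

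For the lower bound, I would fix the structural parameters $(n,\rho,(g_i)_{i=1}^{\delta},\pi,(m_i)_{i=1}^{\chi})$ and count the multisets of size $n$ compatible with these parameters that any comparison-based algorithm must be able to distinguish. Each non-singleton block of size $g_i$ can be ``cut'' between the contributing run and its neighbor in $\Omega(g_i)$ internal positions, contributing $\Omega(\log g_i)$ bits of ambiguity; each member of $\pi$ corresponding to a value of multiplicity $m_i$ can be produced by any of the $\binom{\rho}{m_i}$ subsets of runs, contributing $\Omega(\log\binom{\rho}{m_i})$ bits; and $\Omega(n)$ comparisons are needed simply to certify the runs. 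The standard information-theoretic comparison lower bound then yields the matching bound. Equivalently, any hypothetical sorting algorithm beating this bound could be fed the $\rho$ detected runs and used as a black box to compute their union, contradicting the instance optimality of \texttt{DLM Union} on the corresponding \textsc{Sorted Set Union} instance with the same $(\delta,(g_i),\chi,(m_i))$ decomposition.

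The main obstacle is not either bound in isolation but rather verifying that the worst-case class of multisets matches, in a one-to-one parametric way, the worst-case class of sorted-set-union instances for which \texttt{DLM Union} is known to be instance optimal. Concretely, I need to show that distinct \textsc{Sorting} instances sharing the shape $(\rho,(g_i),(m_i))$ correspond to distinct \textsc{Sorted Set Union} instances sharing the same shape, so that the adversary forcing the union lower bound also forces the sorting lower bound; once this correspondence is laid out carefully, the instance optimality of \texttt{DLM Union} transfers cleanly to \texttt{DLM Sort} and the corollary follows.
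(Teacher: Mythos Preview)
Your proposal is correct and takes essentially the same approach as the paper: the paper presents this result as an immediate corollary of the preceding refined analysis of \texttt{DLM Union}, noting only that \texttt{DLM Sort} is ``linear run detection $+$ \texttt{DLM Union} $+$ linear output transformation'' and that both the upper bound and the optimality follow directly from the instance optimality of \texttt{DLM Union}. Your write-up is more explicit than the paper's, but the reduction you describe (especially the clean black-box reduction to \texttt{DLM Union}'s instance optimality for the lower bound) is exactly the intended argument; the separate information-theoretic counting sketch you offer is not needed once that reduction is in place.
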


\begin{LONG}
There are families of instances where the algorithm \texttt{DLM Sort} behaves significantly better than the algorithm \texttt{Small vs Small Sort}. Consider for instance the following example:
  \paragraph*{Example 4:
    $\left\langle \frac{\rho-1}{\rho}n+1, \dots, n,
      \frac{\rho-2}{\rho}n+1, \dots, \frac{\rho-1}{\rho}n, \dots, 1,
      \dots, \frac{1}{\rho}n \right\rangle$}~\\

  In this family of instances, there are $\rho$ runs of size
  $\frac{n}{\rho}$ each. The runs are pairwise disjoint and the
  elements of each run are consecutive in the sorted set. The time
  complexity of the algorithm \texttt{Small vs Small Sort} in this instances is
  within $\Theta(n\log{\rho})$, while the time complexity of the algorithm 
  \texttt{DLM Sort} is within
  $\Theta\left(n + \rho\log\rho + \rho\log{\frac{n}{\rho}}\right) =
  \Theta(n + \rho\log{n})$ (which is better than $\Theta(n\log{\rho})$
  for $\rho \in o(n)$).
\end{LONG}

While the algorithm \texttt{DLM Sort} answers the Question 1 from Section~\ref{sec:intro}, it does not yield a \textsc{MultiSelection} algorithm nor a \textsc{Deferred Data Structure} answering Question 2. In the following section we describe another sorting algorithm that also optimally takes advantage of the local order and structure in the input, but which is based on a distinct paradigm, more suitable to such extensions.

\subsection{New Sorting Algorithm {\texttt{Quick Synergy
      Sort}}}\label{sec:ttqu-synergy-sort}

Given a multiset $\mathcal{M}$, the algorithm \texttt{Quick Synergy
  Sort} identifies the \emph{runs} in linear time through a scanning
process. \begin{LONG} As indicated by its name, the algorithm is
  directly inspired by the
  \texttt{QuickSort}~\cite{1961-CACM-Quicksort-Hoare}
  algorithm.\end{LONG} It computes a pivot $\mu$, which is the median
of the set formed by the middle elements of each run, and partitions
each \emph{run} by the value of $\mu$. This partitioning process takes advantage of
the fact that the elements in each \emph{run} are already sorted. It
then recurses on the elements smaller than $\mu$ and on the elements
greater than $\mu$. (See Algorithm~\ref{alg:qss} for a more formal
description).

\begin{definition}[Median of the middles]
  Given a multiset $\mathcal{M}$ formed by runs, the ``\emph{median of
    the middles}'' is the median element of the set formed by the
  middle elements of each run.
\end{definition}

\begin{algorithm} % enter the algorithm environment
  \caption{\texttt{Quick Synergy Sort}} % give the algorithm a caption
  \label{alg:qss} % and a label for \ref{} commands later in the document
  \begin{algorithmic}[1] % enter the algorithmic environment

    \REQUIRE{A multiset $\mathcal{M}$ of size $n$} \ENSURE{A sorted sequence of
      $\mathcal{M}$}

    \STATE Compute the $\rho$ runs of respective sizes
    $(r_i)_{i\in[1..\rho]}$ in $\mathcal{M}$ such that
    $\sum^{\rho}_{i=1} r_i = n$;
    \STATE Compute the median $\mu$ of
    the middles of the runs, note $j\in[1..\rho]$ the run containing
    $\mu$;
    \STATE Perform doubling searches for the value $\mu$ in all
    runs except the $j$-th, starting at both ends of the runs in
    parallel;
    \STATE Find the maximum $\max_\ell$ (minimum $\min_r$)
    among the elements smaller (resp., greater) than $\mu$ in all runs
    except the $j$-th;
    \STATE Perform doubling searches for the values
    $\max_\ell$ and $\min_r$ in the $j$-th run, starting at the
    position of $\mu$;
    \STATE Recurse on the elements smaller than o
    equal to $\max_\ell$ and on the elements greater than or equal to
    $\min_r$.
  \end{algorithmic}
\end{algorithm}

The number of data comparisons performed by the algorithm
\texttt{Quick Synergy Sort} is asymptotically the same as the number
of data comparisons performed by the algorithm \texttt{DLM Sort}
described in the previous section. We divide the proof into two
lemmas. We first bound the
\begin{LONG}
  overall
\end{LONG}
number of data comparisons performed by all the
doubling searches of the algorithm \texttt{Quick Synergy Sort}.

\begin{lemma}\label{lem:blocks}
  Let $g_{1}, \dots, g_{k}$ be the sizes of the $k$ blocks that form
  the $r$-th run. The overall number of data comparisons performed by
  the doubling searches of the algorithm  \texttt{Quick Synergy Sort} to find the
  values of the medians of the middles in the $r$-th run is within
  $O(\sum_{i=1}^{k}\log{g_{i}})$.
\end{lemma}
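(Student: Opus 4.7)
My plan is to amortize the cost of all doubling searches performed in the $r$-th run throughout the recursion of \texttt{Quick Synergy Sort} against the block structure of that run. The analysis rests on two preliminary facts. Structurally, whenever a pivot $\mu$ chosen by the outer algorithm splits the current sub-piece of the $r$-th run into two non-empty parts, $\mu$ must land at a boundary between two consecutive blocks: by definition the elements inside a single block are consecutive in the sorted output, so no value coming from another run (and, since $\mu$ is drawn from the input, $\mu$ would have to come from a run) can sit strictly between them. Cost-wise, a parallel doubling search from the two ends of a sorted segment uses $O(1+\log\min(d_\ell, d_r))$ comparisons, where $d_\ell, d_r$ are the distances from the ends to the target position; in particular, a ``missed'' search in which $\mu$ falls entirely outside the sub-piece costs only $O(1)$.

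Using these facts, I would organize the splitting events on the $r$-th run into a binary tree $\mathcal{T}$ whose $k$ leaves are the blocks $g_{1}, \dots, g_{k}$: each internal node $v$ corresponds to the first pivot that actually splits its sub-piece, into left and right sub-pieces of total weights $L_v$ and $R_v$, and contributes at most $c\,(1+\log\min(L_v, R_v))$ comparisons. The key analytical step is to bound
\[
\sum_{v \text{ internal of } \mathcal{T}} \log\min(L_v, R_v) \;\leq\; \sum_{i=1}^{k}\log g_i \;+\; O(k).
\]
I would derive this from the elementary inequality $\log\min(a,b) \leq \log a + \log b - \log(a+b) + 1$, which follows from $\max(a,b) \geq (a+b)/2$. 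When summed along $\mathcal{T}$ the right-hand side telescopes: every non-root subtree weight contributes positively once as $\log L_v$ or $\log R_v$ at its parent and negatively once as $\log(L_u+R_u)$ at itself, so that only the leaf contributions $\sum_i \log g_i$ survive, minus $\log n_r$ at the root and plus an additive $O(k)$ from the trailing $+1$ per internal node.

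The remaining missed doubling searches, where the outer pivot lies outside the current sub-piece, each cost $O(1)$; I would charge each such search to the adjacent extreme block whose value bounds the sub-piece and argue that the total number of such charges over the life of the $r$-th run is $O(k)$ (bounded by the number of outer sub-problems through which any block of the run is carried undivided). The main obstacle is the telescoping step for the splitting searches: the shape of $\mathcal{T}$ is imposed by the outer algorithm's choice of pivots rather than balanced by the block sizes, so the inequality must hold uniformly over every possible binary tree on the $k$ blocks, not only over balanced ones. Combining the two contributions yields the claimed bound $O\!\left(\sum_{i=1}^{k}\log g_i\right)$, under the standard convention that $\log g_i$ is taken to be at least $1$ (equivalently reading it as $\log(1+g_i)$).
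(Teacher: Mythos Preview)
Your tree of splitting events with the blocks as leaves, together with the telescoping via $\log\min(a,b)\le\log a+\log b-\log(a+b)+1$, is exactly the paper's argument (the paper phrases the same inequality as $2\log(a+b)\le\log a+2\log b$ for $a\ge 4$ and carries an explicit constant $8$ in place of your additive $+1$ per internal node). For the doubling searches that actually split the current sub-piece, your proof is correct and coincides with the paper's.

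Your handling of the ``missed'' searches, however, does not go through. You charge each $O(1)$ miss to the extreme block of the current sub-piece and claim the total is $O(k)$, but a single block can remain the extreme block of its sub-piece through arbitrarily many recursive levels. Concretely, take $\rho$ singleton runs $\{1\},\{2\},\dots,\{\rho\}$ and let the $r$-th run be $\{1\}$, so $k=1$ and $g_1=1$. At every level the median of the middles is roughly the midpoint of the surviving values, hence strictly greater than $1$; the doubling search in $\{1\}$ is a miss and $\{1\}$ is carried to the left child. This repeats for $\Theta(\log\rho)$ levels, so the doubling-search cost in this run is $\Theta(\log\rho)$ while $\sum_i\log g_i\in O(1)$. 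Thus the $O(k)$ bound on misses is false, and in fact the per-run statement cannot hold as written; the paper's own proof simply sidesteps the issue by asserting that every search ``partitions the run by a position separating two blocks'', which is only true of the splitting searches.

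The repair is global rather than per run: summed over all runs, the number of missed searches in a recursive call with $\rho'$ active runs is at most~$\rho'$, and the total of these quantities over the whole recursion is precisely what Lemma~\ref{lem:delta} bounds. So the miss cost belongs in the $O\big(\sum_i\log\binom{\rho}{m_i}\big)$ term of the final theorem, not in $O\big(\sum_i\log g_i\big)$.
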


\begin{proof}
  Every time the algorithm finds the insertion rank of one of the medians of
  the middles in the $r$-th run, it partitions the run by a position
  separating two blocks. The doubling search steps can be represented
  as a tree\begin{LONG} (see Figure~\ref{fig:tree} for a tree
    representation of a particular instance)\end{LONG}. Each
  node of the tree corresponds to a step. Each
  internal node has two children, which correspond to the two
  subproblems into which the step partitions the
  run. The cost of the step is less than four times the logarithm of
  the size of the child subproblem with smaller size, because of the
  two doubling searches in parallel. The leaves of the tree correspond
  to the blocks themselves.

  We prove that at each step, the total cost is bounded by eight times
  the sum of the logarithms of the sizes of the leaf subproblems. This
  is done by induction over the number of steps. If the number of
  steps is zero then there is no cost. For the inductive step, if the
  number of steps increases by one, then a new doubling search step is
  done and a leaf subproblem is partitioned into two new
  subproblems. At this step, a leaf of the tree is transformed into an
  internal node and two new leaves are created. Let $a$ and $b$ such
  that $a\leq b$ be the sizes of the new leaves created. The cost of
  this step is less than $4\log{a}$. The cost of all the steps then
  increases by $4\lg a$, and hence the sum of the logarithms of the
  sizes of the leaves increases by $8(\lg a + \lg b) -
  8\lg({a+b})$. But if $a \ge 4$ and $b \ge a$, then
  $2\lg(a+b) \le \lg a + 2\lg b$. The result
  follows.\begin{LONG}\qed\end{LONG}
\end{proof}

\begin{LONG}
  \begin{minipage}[c]{.45\textwidth}
    \centering
    \includegraphics[scale=1.1]{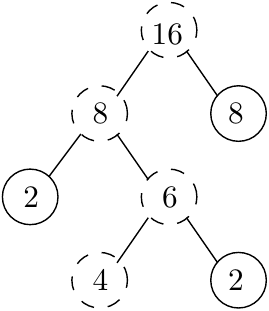}
  \end{minipage}\hfill
  \begin{minipage}[c]{.45\textwidth}
    \captionof{figure}{The tree that represents the doubling search
      steps for a run composed of four blocks of respective sizes
      $2,4,2,8$. The size of the subproblem is noted in each node. At
      each subproblem, the cost of the step is the logarithm of the
      size of the subproblem of the solid child.}
    \label{fig:tree}
  \end{minipage}
\end{LONG}
%
%% \begin{figure}[H]
%%   \centering
%%   \includegraphics[scale=1.3]{Tree1}
%%   \caption{The tree that represents the doubling search process for
%%   a run composed of four blocks of respective sizes $2,4,2,8$. The
%%   size of the subproblem is noted in each node. At each subproblem,
%%   the cost of the process is the logarithm of the size of the
%%   subproblem of the solid child.}
%%   \label{fig:tree}
%% \end{figure}

\begin{INUTILE}
  When the algorithm finds the insertion rank of the median $\mu$ of
  the middles in each run, it also finds the divisions between the
  blocks in the runs, or yields complete runs to the left or to the
  right of $\mu$.
\end{INUTILE}

\begin{LONG}
  The step that computes the median $\mu$ of the middles of $\rho$
  runs and the step that finds the maximum $\max_\ell$ (minimum
  $\min_r$) among the elements smaller (resp., greater) than $\mu$ of
  $\rho$ runs performs linear in $\rho$ data comparisons.
\end{LONG}
As shown in the following lemma, the overall number of data
comparisons performed during
\begin{LONG}
  these steps
\end{LONG}
\begin{SHORT}the computation of the medians of the middles
  \begin{INUTILE}
    and during the finding of the maximum (minimum) among the elements
    smaller (resp., greater) than the medians
  \end{INUTILE}
\end{SHORT}is within $O(\sum^{\chi}_{i=1}\log{\binom{\rho}{m_i}})$, where
$m_1, \dots, m_{\chi}$ are the values of the member of the partition
$\pi$ (see Section~\ref{sec:dlm-sorting} for the definition of $\pi$)
and $\rho$ is the number of runs in $\mathcal{M}$.

Consider the instance depicted in Figure~\ref{fig:clue} for an example
illustrating where the term $\log{\binom{\rho}{m_v}}$ comes from. In
this instance, there is a value $v$ that has multiplicity $m_v>1$ in
$\mathcal{M}$ and the rest of the values have multiplicity $1$. The
elements with value $v$ are present at the end of the last $m_v$ runs
and the rest of the runs are formed by only one block. The elements of
the $i$-th run are greater than the elements of the $(i+1)$-th
run. During the computation of the medians of the middles, the number
of data comparisons that involve elements of value $v$ is within
$O(\log{\binom{\rho}{m_v}})$. The algorithm computes the median $\mu$
of the middles and partitions the runs by the value of $\mu$. In the recursive call
that involves elements of value $v$, the number of runs is reduced by
half. This is repeated until one occurrence of $\mu$ belongs to one of
the last $m_v$ runs. The number of data comparisons that involve
elements of value $v$ up to this step is within
$O(m_v\log{\frac{\rho}{m_v}})=O(\log{\binom{\rho}{m_v}})$, where
$\log{\frac{\rho}{m_v}}$ corresponds to the number of steps where
$\mu$ does not belong to the last $m_v$ runs. The next recursive call
will necessarily choose one element of value $v$ as the median of the
middles.

\begin{minipage}[c]{.45\textwidth}
  \centering
  \includegraphics[scale=1]{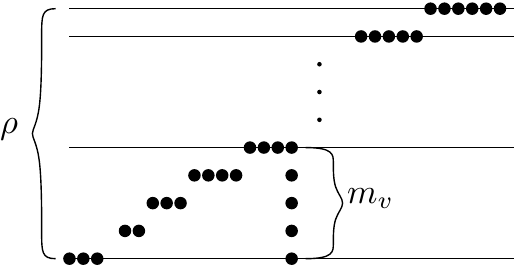}
\end{minipage}\hfill
\begin{minipage}[c]{.45\textwidth}
  \captionof{figure}{A multiset $\mathcal{M}$ formed by $\rho$
    runs. Each entry $\mathcal{M}[i]$ is represented by a point of
    $x$-coordinate $\mathcal{M}[i]$. There is an element of multiplicity $m_v$
    present in the last $m_v$ runs and the rest of the runs are formed
    by only one block.}
  \label{fig:clue}
\end{minipage}

\begin{lemma}\label{lem:delta}
  Let $\mathcal{M}$ be a multiset formed by $\rho$ runs and $\delta$
  blocks such that these blocks induce a partition $\pi$ of the output
  of size $\chi$ whose members have values $m_1, \dots,
  m_{\chi}$. Consider the steps that compute the medians of the
  middles and the steps that find the elements $\max_\ell$ and
  $\min_r$ in the algorithm  \texttt{Quick Synergy Sort}, the overall number of data
  comparisons performed during these steps is within
  $O(\sum^{\chi}_{i=1}\log{\binom{\rho}{m_i}})$.
\end{lemma}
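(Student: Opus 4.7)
The plan is to bound the total cost $\sum_v O(\rho(v))$, summed over all nodes $v$ of the recursion tree of \texttt{Quick Synergy Sort}, where $\rho(v)$ denotes the number of (sub-)runs in the subproblem at $v$; the cost of each recursive call to compute $\mu$, $\max_\ell$ and $\min_r$ is linear in this count. The strategy is to charge this total to the members of the partition $\pi$, aiming to show that each member of value $m_i$ absorbs at most $O(\log\binom{\rho}{m_i})$ units, so that summing over $i\in[1..\chi]$ yields the claimed bound.

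First I would exploit the median-of-middles guarantee: since $\mu$ is the median of the middle elements of the $\rho(v)$ sub-runs at node $v$, at least $\lceil \rho(v)/2\rceil$ of those sub-runs have their middle on each side of $\mu$, and so at every descent at least half of the runs carrying any value other than $\mu$ itself disappear from the subproblem on that side. This provides a geometric-decrease engine: a partition member whose value range lies strictly on one side of $\mu$ at every level can persist through at most $O(\log\rho)$ levels before being isolated.

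Next I would decompose the total charge by type of partition member. A block member with $m_i=1$ corresponds to a contiguous value range lying inside a single original run; along the unique root-to-leaf path it follows, each recursion level contributes $O(1)$ to its account (the factor $\rho(v)$ of the node cost being shared among the $\Theta(\rho(v))$ sub-runs present), yielding $O(\log \rho) = O(\log\binom{\rho}{1})$. A multiplicity-$m_i$ singleton, as in the example of Figure~\ref{fig:clue}, keeps all $m_i$ of its copies in the same subproblem for as long as the chosen $\mu$ differs from its value; the geometric decrease bounds this phase to $O(\log(\rho/m_i))$ levels during which the singleton shares the $O(\rho(v))$ cost with $\rho(v)\in O(\rho/2^\ell)$ other sub-runs but contributes $m_i$ of them, charging $O(m_i)$ per level, for a total of $O(m_i\log(\rho/m_i)) = O(\log\binom{\rho}{m_i})$. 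A final $O(m_i)$ step, triggered once $\mu$ coincides with the singleton's value, isolates its copies and fits inside the same asymptotic bound.

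The main obstacle will be making this charging scheme airtight, since runs that straddle the pivot $\mu$ produce sub-runs on both sides and thus can cause $\rho(v_L)+\rho(v_R)$ to exceed $\rho(v)$, so that a naive ``halving at every level'' accounting fails at the level of sub-runs. I would resolve this by distinguishing original runs from the sub-runs they generate, and by absorbing the straddling splits into the block count (each such split converts one sub-run into two, bounded overall by $\delta$) rather than into member charges, so that only the persistence of a member through balanced partitions contributes to its share. Summing the member-indexed charges then matches $O\bigl(\sum_{i=1}^{\chi}\log\binom{\rho}{m_i}\bigr)$ as required.
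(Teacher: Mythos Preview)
Your charging scheme is intuitively appealing but rests on a claim that does not hold. You assert that the median-of-middles step guarantees that ``at every descent at least half of the runs \ldots\ disappear from the subproblem on that side.'' What the median of the middles actually guarantees is the opposite inequality: at least $\lceil\rho(v)/2\rceil$ of the sub-runs have their middle on each side of $\mu$, hence at least $\lceil\rho(v)/2\rceil$ sub-runs contribute an element to \emph{each} child. This is a \emph{lower} bound $\rho(v_L),\rho(v_R)\ge\rho(v)/2$, not an upper bound; a sub-run whose middle lies left of $\mu$ may still straddle $\mu$ and thus survive into the right child as well. Consequently there is no geometric halving of the run count along a root-to-leaf path, and the $O(\log\rho)$ depth bound that your per-member charges rely on is not established. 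Your proposed fix of ``absorbing the straddling splits into the block count, bounded overall by~$\delta$'' bounds the total \emph{number} of splits, but each split increases $\rho(\cdot)$ by one at \emph{every} descendant node of the split, so a bound of $\delta$ on the splits does not translate into an additive $O(\delta)$ on $\sum_v\rho(v)$.

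The paper avoids this difficulty altogether by a direct induction with a strengthened invariant. Writing $\mathcal{T}(\pi,\rho)$ for the total cost, it proves $\mathcal{T}(\pi,\rho)\le\sum_{i}m_i\log(\rho/m_i)-\rho$; the extra $-\rho$ is what makes the inductive step close. With $\ell,r$ the numbers of sub-runs lying entirely left and right of $\mu$ and $b$ the number of straddlers, the recursion is $\mathcal{T}(\pi,\rho)=\mathcal{T}(\pi_\ell,\ell+b)+\mathcal{T}(\pi_r,r+b)+\rho$. The median-of-middles property is used only as the balance condition $\ell\le r+b$ and $r\le\ell+b$, which together with $\sum_{\pi_\ell}m_i\ge\ell+b$ and $\sum_{\pi_r}m_i\ge r+b$ (each side has at least as many blocks as sub-runs) and the elementary inequality $x\log(1+y/x)\ge y$ for $y\le x$ suffices to verify the inductive step. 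The straddling runs are thus handled automatically through the parameter $b$, with no separate accounting needed.
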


\begin{proof}
  We prove this lemma by induction over the size of $\pi$ and
  $\rho$. The number of data comparisons performed by one of these
  steps is linear in the number of runs in the sub-instance (i.e.,
  ignoring all the empty sets of this sub-instance). Let
  $\mathcal{T}(\pi,\rho)$ be the overall number of data comparisons
  performed during the steps that compute the medians of the middles
  and during the steps that find the elements $\max_\ell$ and $\min_r$
  in the algorithm  \texttt{Quick Synergy Sort}. We prove that
  $\mathcal{T}(\pi, \rho) \le
  \sum^{\chi}_{i=1}m_i\log{\frac{\rho}{m_i}} - \rho$. Let $\mu$ be the
  first median of the middles computed by the algorithm.  Let $\ell$
  and $r$ be the number of runs that are completely to the left and to
  the right of $\mu$, respectively. Let $b$ be the number of runs that
  are split in the doubling searches for the value $\mu$ in
  all runs. Let $\pi_\ell$ and $\pi_r$ be the partitions induced by
  the blocks yielded to the left and to the right of $\mu$,
  respectively. Then,
  $\mathcal{T}(\pi, \rho) = \mathcal{T}(\pi_\ell, \ell+b) +
  \mathcal{T}(\pi_r, r+b) + \rho$ because of the two recursive calls
  and the step that computes $\mu$. By Induction Hypothesis,
  $\mathcal{T}(\pi_\ell,\ell+b) \le
  \sum^{\chi_\ell}_{i=1}m_i\log{\frac{\ell + b}{m_i}} - \ell - b$ and
  $\mathcal{T}(\delta_r, r+b) \le
  \sum^{\chi_r}_{i=1}m_i\log{\frac{r+b}{m_i}} - r - b$. Hence, we
  need to prove that
  $\ell+r \le \sum^{\chi_\ell}_{i=1}m_i \log\left({1 +
      \frac{r}{\ell+b}}\right) + \sum^{\chi_r}_{i=1}m_i\left({1 +
      \frac{\ell}{r+b}}\right)$, but this is a consequence of
  $\sum^{\chi_\ell}_{i=1}m_i \ge \ell+b, \sum^{\chi_r}_{i=1}m_i \ge
  r+b$ (the number of blocks is greater than or equal to the number of runs);
  $\ell \le r+b, r\le\ell + b$ (at least $\frac{\rho}{2}$ runs are
  left to the left and to the right of $\mu$); and
  $\log\left({1 + \frac{y}{x}}\right)^x \ge y$ for
  $y \le x$.\begin{LONG}\qed\end{LONG}
\end{proof}

Consider the step that performs doubling searches for the values
$\max_\ell$ and $\min_r$ in the run that contains the median $\mu$ of
the middles, this step results in the finding of the block $g$ that
contains $\mu$ in at most $4\log{|g|}$ data comparisons, where $|g|$
is the size of $g$.
Combining Lemma~\ref{lem:blocks} and Lemma~\ref{lem:delta} yields an
upper bound on the number of data comparisons performed by
the algorithm \texttt{Quick Synergy Sort}:

\begin{theorem}
  Let $\mathcal{M}$ be a multiset of size $n$ formed by $\rho$ runs
  and $\delta$ blocks of sizes $g_1, \dots, g_{\delta}$ such that
  these blocks induce a partition $\pi$ of the output of size $\chi$
  whose members have values $m_1, \dots, m_{\chi}$. The algorithm
  \texttt{Quick Synergy Sort} performs within
  $O(n + \sum^{\delta}_{i=1} \log g_i +
  \sum^{\chi}_{i=1}\log{\binom{\rho}{m_i}})$ data comparisons on
  $\mathcal{M}$. This number of comparisons is optimal in the worst
  case over multisets of size $n$ formed by $\rho$ runs and $\delta$
  blocks of sizes $g_1, \dots, g_{\delta}$ such that these blocks
  induce a partition $\pi$ of size $\chi$ of the output whose members
  have values $m_1, \dots, m_{\chi}$.
\end{theorem}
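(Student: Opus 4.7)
The plan is to split the proof into two independent parts: a constructive upper bound obtained by bookkeeping the comparisons performed by \texttt{Quick Synergy Sort} through Lemmas~\ref{lem:blocks} and~\ref{lem:delta}, and a matching worst-case lower bound that I would inherit from the optimality statement already proven for \texttt{DLM Sort} in the corollary of Section~\ref{sec:dlm-sorting}.

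For the upper bound, I would partition the comparisons performed by the algorithm into four disjoint groups and bound each separately. The first group is the initial linear scan that identifies the $\rho$ runs, contributing $O(n)$. The second group is the per-recursion-level work of computing the median $\mu$ of the middles and locating $\max_\ell$ and $\min_r$; each such step is linear in the number of active runs, and the total across the entire recursion tree is governed by Lemma~\ref{lem:delta}, giving $O(\sum_{i=1}^{\chi}\log\binom{\rho}{m_i})$. The third group is the doubling searches for $\mu$ performed inside every run other than the one containing $\mu$; aggregating these on a run-by-run basis via Lemma~\ref{lem:blocks} and summing over the $\rho$ runs yields $O(\sum_{i=1}^{\delta}\log g_i)$. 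The fourth group is the pair of doubling searches for $\max_\ell$ and $\min_r$ inside the $j$-th run, which, as noted in the paragraph preceding the theorem, cost $O(\log|g|)$ for the unique block $g$ containing $\mu$ at that recursive call, and these too can be charged to the block-log budget of Lemma~\ref{lem:blocks} since each such search isolates a fresh block boundary of that run. Summing the four groups gives exactly $O(n + \sum_{i=1}^{\delta}\log g_i + \sum_{i=1}^{\chi}\log\binom{\rho}{m_i})$.

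For the worst-case optimality statement, I would simply appeal to the corollary of Section~\ref{sec:dlm-sorting}: the class of instances there is parameterized by the very same statistics ($n$, $\rho$, block sizes $g_1,\dots,g_\delta$, and partition values $m_1,\dots,m_\chi$), and the $\Omega(n + \sum_{i=1}^{\delta}\log g_i + \sum_{i=1}^{\chi}\log\binom{\rho}{m_i})$ lower bound proven there is information-theoretic and therefore applies to any comparison-based sorting algorithm, \texttt{Quick Synergy Sort} included.

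The main obstacle I expect is in the fourth bookkeeping group above: I need to justify that adding the extra doubling searches for $\max_\ell$ and $\min_r$ inside the $j$-th run does not exceed the block-log budget of Lemma~\ref{lem:blocks}. The key observation is that each such search partitions the relevant sub-instance of the $j$-th run at a block boundary (the one bracketing $\mu$) in exactly the same manner as the doubling searches analyzed in Lemma~\ref{lem:blocks}, so the same inductive tree-cost argument applies without modification once the two families of searches are interleaved. Everything else is routine summation of the three previously established bounds.
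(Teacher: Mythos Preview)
Your proposal is correct and follows essentially the same approach as the paper: the paper's proof is just the one-line observation that combining Lemma~\ref{lem:blocks} and Lemma~\ref{lem:delta} (together with the $4\log|g|$ bound for the doubling searches inside the $j$-th run noted just before the theorem, and the linear run-detection scan) yields the upper bound, while optimality is inherited from the corollary for \texttt{DLM Sort}. Your four-group bookkeeping is a faithful and slightly more explicit unpacking of exactly this argument.
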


We extend these results to take advantage of the global order of the
multiset in a way that can be combined with the notion of runs (local
order).

\subsection{Taking Advantage of Global Order}
\label{sec:global}

Given a multiset $\mathcal{M}$, a \emph{pivot position} is a position $p$ in $\mathcal{M}$ such that all elements in previous position are smaller than or equal to all elements at $p$ or in the following positions. Formally:

\begin{definition}[Pivot positions]
  Given a multiset $\mathcal{M} = (x_1, \dots, x_n)$ of size $n$,
  the pivot positions are the positions $p$ such that $x_a\le x_b$ for
  all $a,b$ such that $a\in[1..p-1]$ and $b\in[p..n]$.
\end{definition}

Existing pivot positions in the input order of $\mathcal{M}$ divide the input into subsequences of consecutive elements such that the range of positions of the elements at each subsequence coincide with the range of positions of the same elements in the sorted sequence of $\mathcal{M}$: the more there are of such positions, the more ``global'' order there is in the input. Detecting such positions takes only a linear number of comparisons.

\begin{lemma}
  Given a multiset $\mathcal{M}$ of size $n$ with $\phi$ pivot
  positions $p_1, \dots, p_{\phi}$, the $\phi$ pivot positions
  can be detected within a linear number of comparisons.
\end{lemma}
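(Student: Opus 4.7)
The plan is to reduce the detection of pivot positions to the computation of prefix maxima and suffix minima of the input sequence $\mathcal{M}=(x_1,\dots,x_n)$. Concretely, I would observe that the definition of a pivot position at index $p$ is equivalent to the inequality $\max(x_1,\dots,x_{p-1}) \le \min(x_p,\dots,x_n)$: the forward direction is immediate, and the reverse follows because every pair $(x_a,x_b)$ with $a<p\le b$ is sandwiched between this prefix maximum and this suffix minimum. So it suffices to compute, for each $p\in[1..n]$, the two quantities $L[p-1]\coloneqq\max(x_1,\dots,x_{p-1})$ and $R[p]\coloneqq\min(x_p,\dots,x_n)$, and then test $L[p-1]\le R[p]$.

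I would then compute the whole array $L$ by a single left-to-right scan using the recurrence $L[i]=\max(L[i-1],x_i)$ with the convention $L[0]=-\infty$, which costs exactly $n-1$ comparisons. Symmetrically, the array $R$ is produced by a right-to-left scan with $R[i]=\min(R[i+1],x_i)$ and $R[n+1]=+\infty$, for another $n-1$ comparisons. Finally, a single pass through $p=1,\dots,n$ performs $n$ comparisons to test the condition $L[p-1]\le R[p]$ and collects the set $\{p_1,\dots,p_\phi\}$ of pivot positions. The total comparison count is thus $3(n-1)+n \in O(n)$, establishing the linear bound claimed in the lemma.

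There is essentially no hard step here: the only thing to verify with care is the equivalence between the definitional quantifier over all pairs $(a,b)$ with $a<p\le b$ and the single comparison between the prefix maximum and the suffix minimum, together with the handling of the boundary cases $p=1$ (vacuously a pivot, enforced by $L[0]=-\infty$) and $p=n$ (the condition becomes $L[n-1]\le x_n$). The rest is a routine bookkeeping of comparisons performed by the two scans and the final sweep, and I would present it as such rather than dwelling on the arithmetic.
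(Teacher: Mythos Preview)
Your proof is correct and rests on the same idea as the paper's: two linear scans, one tracking prefix maxima and one tracking suffix minima. The paper phrases these scans as a single \texttt{bubble-up} and a single \texttt{bubble-down} pass of \texttt{BubbleSort} (an element left untouched by bubble-up is precisely a prefix maximum, one left untouched by bubble-down is precisely a suffix minimum), whereas you store the arrays $L$ and $R$ explicitly and test $L[p-1]\le R[p]$; your formulation is arguably more direct and handles the boundary case $p=1$ more transparently. (Minor arithmetic slip: the tally you describe is $2(n-1)+n$, not $3(n-1)+n$, but this of course does not affect the $O(n)$ bound.)
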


\begin{proof}
\begin{TODO}
FACT ~\cite{1973-BOOK-TheArtOfComputerProgrammingVol3-Knuth} does NOT define a step bubble up and/or bubble-down
\end{TODO}
  The \texttt{bubble-up} step of the algorithm
  \texttt{BubbleSort}~\cite{1973-BOOK-TheArtOfComputerProgrammingVol3-Knuth}
  sequentially compares the elements in positions $i-1$ and $i$ of
  $\mathcal{M}$, for $i$ from 2 to $n$. If
  $\mathcal{M}[i-1] > \mathcal{M}[i]$, then the elements interchange
  their values.
  \begin{LONG}
    As consequence of this step the elements with large values tend to
    move to the right.
  \end{LONG}
  In an execution of a \texttt{bubble-up} step in $\mathcal{M}$, the
  elements that do not interchange their values are those elements
  whose values are greater than or equal to all the elements on their
  left. The \texttt{bubble-down} step is similar to the
  \texttt{bubble-up} step, but it scans the sequence from right to
  left\begin{LONG},
    interchanging the elements in positions $i-1$ and $i$ if
    $\mathcal{M}[i-1] > \mathcal{M}[i]$. In an execution of a
    \texttt{bubble-down} step in $\mathcal{M}$, the elements that do
    not interchange their values are those elements whose values are
    smaller than or equal to all the elements on their
    right\end{LONG}. Hence, the positions of the elements that do not
  interchange their values during the executions of both
  \texttt{bubble-up} and \texttt{bubble-down} steps are the pivot
  positions in $\mathcal{M}$. \begin{LONG}\qed\end{LONG}
\end{proof}

When there are $\phi$ such positions, they simply divide the input of
size $n$ into $\phi+1$ sub-instances of sizes $n_0,\ldots,n_\phi$
(such that $\sum^{\phi}_{i=0}n_i = n)$. Each sub-instance $I_i$ for
$i\in[0..\phi]$ then has its own number of runs $r_i$ and alphabet
size $\sigma_i$, on which the synergistic solutions described in this
work can be applied, from mere \textsc{Sorting}
(Section~\ref{sec:synergy-sorting}) to supporting \textsc{MultiSelection} (Section~\ref{sec:synergy-deferr-data}) and  the more sophisticated \textsc{Deferred
Data Structures} (Section~\ref{sec:online-synergy-defer}).

\begin{corollary}
  Let $\mathcal{M}$ be a multiset of size $n$ with $\phi$ pivot
  positions. The $\phi$ pivot positions divide $\mathcal{M}$ into
  $\phi+1$ sub-instances of sizes $n_0,\ldots,n_\phi$ (such that
  $\sum^{\phi}_{i=0} n_i = n$). Each sub-instance $I_i$ of size $n_i$
  is formed by $\rho_i$ runs and $\delta_i$ blocks of sizes
  $g_{i1}, \dots, g_{i\delta_i}$ such that these blocks induce a
  partition $\pi_i$ of the output of size $\chi_i$ whose members have
  values $m_{i1}, \dots, m_{i\chi_i}$ for $i\in[0..\phi]$. There
  exists an algorithm that performs within
  $O(n + \sum^{\phi}_{i=0}\left\{\sum^{\delta_i}_{j=1} \log g_{ij} +
    \sum^{\chi_i}_{j=1}\log{\binom{\rho}{m_{ij}}}\right\})$ data
  comparisons for sorting $\mathcal{M}$. This number of comparisons is
  optimal in the worst case over multisets of size $n$ with $\phi$
  pivot positions which divide the multiset into $\phi+1$ sub-instances of
  sizes $n_0,\ldots,n_\phi$ (such that $\sum^{\phi}_{i=0} n_i = n$)
  and each sub-instance $I_i$ of size $n_i$
  is formed by $\rho_i$ runs and $\delta_i$ blocks of sizes
  $g_{i1}, \dots, g_{i\delta_i}$ such that these blocks induce a
  partition $\pi_i$ of the output of size $\chi_i$ whose members have
  values $m_{i1}, \dots, m_{i\chi_i}$ for $i\in[0..\phi]$.
\end{corollary}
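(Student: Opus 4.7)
The plan is to prove both the upper bound (by constructing the algorithm) and the matching lower bound (via information theory on sub-instances).

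For the upper bound, I would first invoke the lemma that pivot positions can be detected in a linear number of comparisons, via one \texttt{bubble-up} scan and one \texttt{bubble-down} scan. By definition of a pivot position, every element in the sub-instance $I_i$ is $\leq$ every element of $I_{i+1}$, so the sorted output is the concatenation of the sorted versions of $I_0,\ldots,I_\phi$; no comparison between elements in different sub-instances is ever required. I then apply the algorithm \texttt{Quick Synergy Sort} from Section~\ref{sec:ttqu-synergy-sort} independently to each sub-instance $I_i$. By the theorem of Section~\ref{sec:ttqu-synergy-sort}, this uses $O\bigl(n_i + \sum_{j=1}^{\delta_i}\log g_{ij} + \sum_{j=1}^{\chi_i}\log\binom{\rho_i}{m_{ij}}\bigr)$ data comparisons on $I_i$. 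Summing over $i\in[0..\phi]$ and absorbing the linear pivot-detection cost together with $\sum_{i=0}^\phi n_i = n$ into the additive $n$ term yields the claimed upper bound.

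For the optimality statement, I would argue in an adversarial / information-theoretic manner. Fix the parameters $(n,\phi,n_0,\ldots,n_\phi)$ and, for each $i$, the parameters $\rho_i$, $\delta_i$, $(g_{ij})_j$, $\chi_i$, $(m_{ij})_j$ describing $I_i$. Consider the class $\mathcal{F}$ of all multisets of size $n$ whose pivot positions and sub-instance descriptions agree with these parameters. Any correct comparison-based sorting algorithm on inputs from $\mathcal{F}$ induces, via its decision tree, distinct leaves for different inputs in $\mathcal{F}$. Since the sub-instances are independent (any combination of per-sub-instance inputs consistent with the parameters yields an element of $\mathcal{F}$), we have $|\mathcal{F}| = \prod_{i=0}^{\phi} |\mathcal{F}_i|$, where $\mathcal{F}_i$ is the analogous class of multisets consistent with the parameters of $I_i$ alone. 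The worst-case optimality bound from the theorem of Section~\ref{sec:ttqu-synergy-sort} gives $\log|\mathcal{F}_i| = \Omega\bigl(\sum_{j=1}^{\delta_i}\log g_{ij} + \sum_{j=1}^{\chi_i}\log\binom{\rho_i}{m_{ij}}\bigr)$, so the decision-tree depth is at least $\log|\mathcal{F}| = \sum_{i=0}^\phi \log|\mathcal{F}_i|$, matching the upper bound.

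The main obstacle is the second argument: one must make sure that the per-sub-instance lower bound can be invoked simultaneously for all $i$. This is where the independence observation is essential. Concretely, for any fixed choice of inputs $I_0,\ldots,I_{i-1},I_{i+1},\ldots,I_\phi$ in their respective classes, the restricted problem of sorting an arbitrary $I_i\in\mathcal{F}_i$ embeds into our problem; hence an algorithm that always uses fewer than $\log|\mathcal{F}_i|$ comparisons on such inputs would contradict the lower bound of Section~\ref{sec:ttqu-synergy-sort} applied to $I_i$ alone. A standard averaging over the decision tree then converts this into the additive lower bound $\sum_{i}\log|\mathcal{F}_i|$, completing the proof. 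The remaining bookkeeping — that the constant factors absorb the additional $O(n)$ needed to detect pivots and to write out the output — is routine.
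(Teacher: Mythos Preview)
Your proposal is correct and follows exactly the route the paper intends: the corollary is stated without proof in the paper, as an immediate consequence of the linear-time pivot-detection lemma combined with applying the \texttt{Quick Synergy Sort} theorem independently to each sub-instance $I_i$, which is precisely your upper-bound argument; the optimality is inherited from the per-instance optimality of that theorem (itself borrowed from Demaine et al.'s instance optimality for \texttt{Sorted Set Union}), and your decision-tree/independence argument is a perfectly valid way to make that inheritance explicit.

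One small point worth flagging: you (correctly) obtain $\log\binom{\rho_i}{m_{ij}}$ from applying the theorem to $I_i$, whereas the corollary as stated writes $\log\binom{\rho}{m_{ij}}$. Since $\rho_i\le\rho$, your bound implies the stated one for the upper bound, and for the lower bound your per-sub-instance argument actually yields the tighter $\rho_i$ version; the $\rho$ in the corollary is almost certainly a typographical slip in the paper.
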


Next, we generalize the algorithm \texttt{Quick Synergy Sort} to an
offline multiselection algorithm that partially sorts a multiset according to the set
of \texttt{select} queries given as input. This serves as a
pedagogical introduction to the online \textsc{Deferred Data
  Structures} for answering \texttt{rank} and \texttt{select} queries
presented in Section~\ref{sec:online-synergy-defer}.

\begin{LONG}
  For simplicity, in Section~\ref{sec:synergy-deferr-data} and
  Section~\ref{sec:online-synergy-defer} we first describe results
  ignoring existing pivot positions, and then present the complete
  result as corollary.
\end{LONG}

\section{MultiSelection Algorithm}\label{sec:synergy-deferr-data}

Given a linearly ordered multiset $\mathcal{M}$ and a sequence of
ranks $r_1, \dots, r_q$, a multiselection algorithm must answer the
queries \texttt{select}($r_1$), $\dots$, \texttt{select}($r_q$) in
$\mathcal{M}$, hence partially sorting $\mathcal{M}$.\begin{INUTILE} We describe
  upper and lower bounds for the multiselection problem based on
  the block representation of the multiset $\mathcal{M}$.

  \subsection{Lower Bound}\label{sec:lower-bound}

  As shown by Dobkin and
  Munro~\cite{1981-JACM-OptimalTimeMinimalSpaceSelectionAlgorithms-DobkinMunro},
  the complexity of the multiselection problem is closely related
  to the complexity of the sorting problem.
  The lower bound for sorting a multiset $\mathcal{M}$, consisting of
  $\rho$ runs and $\delta$ blocks of sizes $g_1, \dots, g_\delta$,
  follows from the results of Demaine et
  al.~\cite{2000-SODA-AdaptiveSetIntersectionsUnionsAndDifferences-DemaineLopezOrtizMunro}.

\begin{corollary}
  Given a multiset $\mathcal{M}$ formed by $\rho$ runs and $\delta$
  blocks of sizes $g_1, \dots, g_\delta$, the entropy \texttt{Sort}(M)
  of the task of sorting $\mathcal{M}$ is
  $n + \delta\log{\rho} + \sum^{\delta}_{i=1} \log g_i$.
\end{corollary}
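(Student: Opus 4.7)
The plan is to derive the claimed expression as an information-theoretic lower bound on the number of comparisons required by any comparison-based sorting algorithm, following the framework of Demaine et al. My first step would be to invoke the standard decision-tree argument: restricting attention to the family $\mathcal{I}$ of all multisets of size $n$ consistent with the prescribed structural parameters ($\rho$ runs and $\delta$ blocks of sizes $g_1, \dots, g_\delta$), any comparison-based algorithm induces a decision tree whose leaves must separate the elements of $\mathcal{I}$, so its height is at least $\lceil \log_2 |\mathcal{I}| \rceil$.

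Next I would count $|\mathcal{I}|$ by decomposing each instance into two independent pieces of information. The first is the assignment of blocks to runs: with $\delta_i$ blocks in the $i$-th run (where $\sum_{i=1}^{\rho} \delta_i = \delta$), the number of distinguishable interleavings of blocks across runs is the multinomial $\binom{\delta}{\delta_1,\dots,\delta_\rho}$, which in the worst case (balanced assignment) is $\Theta(\delta \log \rho)$ after taking logarithms. The second is the placement of block boundaries within each run: specifying where each block of size $g_i$ sits inside its run, equivalently resolving each boundary by a doubling search, contributes $\sum_{i=1}^{\delta} \log g_i$ bits and matches the upper bound already established in the analysis leading to Lemma~\ref{lem:blocks}. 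The linear $\Omega(n)$ term is unavoidable since every input element must be inspected at least once for the output to depend on it.

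Summing the three contributions yields $n + \delta \log \rho + \sum_{i=1}^{\delta} \log g_i$. The hard part will be establishing the tightness of the multinomial term: I would need to exhibit a family of inputs (for instance, with blocks distributed roughly evenly among the $\rho$ runs) for which $\log \binom{\delta}{\delta_1,\dots,\delta_\rho} = \Omega(\delta \log \rho)$, and then confirm via a Stirling-type estimate that no cancellation across the decomposition reduces the total. Combining this lower bound with the matching upper bound from the \texttt{DLM Sort} analysis then characterizes the entropy \texttt{Sort}($\mathcal{M}$) exactly by the claimed expression.
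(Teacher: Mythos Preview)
The paper does not prove this corollary; it states only that the bound ``follows from the results of Demaine et al.'' on the \textsc{Sorted Set Union} problem and leaves it at that. So your decision to invoke the Demaine--L\'opez-Ortiz--Munro framework is exactly what the paper does, and at that level your proposal is aligned with it.

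That said, your concrete decomposition of $|\mathcal{I}|$ has a gap in its second piece. You claim that ``the placement of block boundaries within each run'' contributes $\sum_{i}\log g_i$ bits to the instance count, but once the block sizes $g_1,\ldots,g_\delta$ are fixed as part of the instance class (as in the hypothesis of the corollary) and the assignment of blocks to runs is fixed (your first piece), there is \emph{no} remaining freedom in where each block sits: the blocks inside each run are simply laid down consecutively with their prescribed sizes. A pure counting argument over this class therefore yields only the $\delta\log\rho$ term, not $\sum_i\log g_i$. Your parenthetical ``equivalently resolving each boundary by a doubling search'' is the tell here: doubling search is the algorithmic upper-bound cost, not a contribution to $\log|\mathcal{I}|$.

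The $\sum_i\log g_i$ term in Demaine et al.\ comes instead from an adversary (instance-optimality) argument over the larger class in which the algorithm is \emph{not} told the block sizes in advance: the adversary can keep each block boundary ambiguous over a range of length $\Theta(g_i)$, forcing $\Omega(\log g_i)$ comparisons to localize it. If you reframe your second piece in those terms rather than as a contribution to $|\mathcal{I}|$, the remainder of your plan (the interleaving term giving $\delta\log\rho$ and the linear scan giving $n$) carries through and matches the paper's intent.
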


\begin{TODO}
Carlos, DO CLARIFY: the sentence  introduces definition of pivot blocks and of query blocks - Jeremy
\end{TODO}
We extend the notion of block to the context of partial sorting.
\label{sec:queryBlocks}
\begin{definition}[Query Blocks] 
  Given a multiset $\mathcal{M}$ formed by $\rho$ runs and $\delta$
  blocks, the $q$ \texttt{select} queries correspond to $q$ selected
  elements of $\mathcal{M}$. The ``\emph{query blocks}'' are the
  blocks of $\mathcal{M}$ that contain the selected elements.
\end{definition}

Having determined the $q$ selected elements, the set of elements that
lie between them are known. So, sorting each of these sets yields the
sorted order of the entire multiset.

\begin{lemma}
  Let $\mathcal{M}$ be a multiset formed by $\rho$ runs and $\delta$
  blocks. Let $q$ be the number of \texttt{select} queries. Let
  $G = (g_1, g_2, \dots, g_{\delta})$ be the $\delta$ blocks sorted in
  non-decreasing order. Let $(g'_1, \dots, g'_q)$ be the query blocks
  sorted in non-decreasing order, and let
  $G' = (g'_0 = g_1, g'_1, \dots, g'_q, g'_{q+1} = g_\delta)$ (note
  that $G' \subseteq G$). Let $M_1, \dots, M_{q+1}$ be a partition of
  $G$ such that $M_i$ contains the blocks between $g'_i$ and
  $g'_{i+1}$.  Then, the time complexity of selecting the $q$ queries
  is within
  $\Omega(\texttt{sort}(M) - \sum^{q+1}_{i=1} \texttt{sort}(M_i))$.
\end{lemma}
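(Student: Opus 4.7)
The plan is to establish this lower bound via a standard reduction argument: any multiselection algorithm, together with a small amount of additional sorting work, yields a sorting algorithm for the entire multiset, and then we invoke the sorting lower bound from the earlier corollary.

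First, I would fix any comparison-based algorithm $\mathcal{A}$ that answers the $q$ \texttt{select} queries on $\mathcal{M}$ using $T$ data comparisons. After $\mathcal{A}$ terminates, the values (and positions in the sorted order) of the $q$ selected elements are known; in particular, the indices in which they appear in the sorted output are determined, and these $q$ cut-points partition the remaining elements into $q+1$ sub-multisets $\mathcal{M}_1, \dots, \mathcal{M}_{q+1}$ lying strictly between consecutive query answers. Here the key observation is that the blocks of each sub-multiset $\mathcal{M}_i$ are exactly the members of the part $M_i$ of $G$ defined in the statement, so the cost of sorting each $\mathcal{M}_i$ from scratch is within $O(\texttt{sort}(M_i))$.

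Second, I would continue the computation by running, on each $\mathcal{M}_i$ independently, an optimal multiset sorting algorithm (for instance \texttt{DLM Sort} from Section~\ref{sec:dlm-sorting} or \texttt{Quick Synergy Sort} from Section~\ref{sec:ttqu-synergy-sort}), at a total additional cost of $\sum_{i=1}^{q+1} O(\texttt{sort}(M_i))$ comparisons. The concatenation of these sorted pieces, interleaved with the $q$ selected elements already pinned in place, is a sorted arrangement of $\mathcal{M}$. Hence we have built a sorting algorithm for $\mathcal{M}$ using at most $T + \sum_{i=1}^{q+1} O(\texttt{sort}(M_i))$ comparisons, and by the sorting lower bound this quantity must be $\Omega(\texttt{sort}(\mathcal{M}))$, from which $T \in \Omega\bigl(\texttt{sort}(\mathcal{M}) - \sum_{i=1}^{q+1} \texttt{sort}(M_i)\bigr)$ follows immediately.

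The main obstacle is the middle step: justifying that the blocks of the sub-multiset $\mathcal{M}_i$ between two consecutive selected elements are precisely those in $M_i$, and that the runs of $\mathcal{M}_i$ are compatible with this block structure. This needs a careful look at what happens to a run that straddles a cut-point: it is split into prefixes/suffixes that remain runs in the sub-instances, and only the query blocks themselves are removed from $G$, while every other block lies wholly inside one $\mathcal{M}_i$. Once this combinatorial identification is established, the entropy-style cost function $\texttt{sort}(\cdot)$ is additive in the desired sense, and the reduction closes. A secondary subtlety is to confirm that recovering the relative order among the $q$ selected elements and merging the $q+1$ sorted pieces with them adds only $O(q)$ comparisons, which is absorbed by the $\texttt{sort}(M_i)$ terms.
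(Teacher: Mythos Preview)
Your proposal is correct and follows exactly the approach the paper takes: the paper's entire argument is the one-sentence observation preceding the lemma, ``Having determined the $q$ selected elements, the set of elements that lie between them are known. So, sorting each of these sets yields the sorted order of the entire multiset,'' which is precisely the Dobkin--Munro style reduction you spell out. Your write-up is simply a more detailed expansion of that same idea, including the block-identification subtlety that the paper leaves implicit.
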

\end{INUTILE}
We describe a \textsc{MultiSelection} algorithm based on the sorting algorithm \texttt{Quick Synergy Sort} introduced in Section~\ref{sec:ttqu-synergy-sort}. This algorithm is an intermediate result leading to the two \textsc{Deferred Data Structures} described in Section~\ref{sec:online-synergy-defer}.

\begin{INUTILE}
  \subsection{Upper Bound}\label{sec:upper-bound}
\end{INUTILE}

Given a multiset $\mathcal{M}$ and a set of $q$ \texttt{select}
queries, the algorithm \texttt{Quick Synergy MultiSelection}
follows the same first steps as the algorithm \texttt{Quick Synergy
  Sort}. But once it has computed the ranks of all elements in the block that
contains the pivot $\mu$, it determines which \texttt{select} queries
correspond to elements smaller than or equal to $\max_\ell$ and which
ones correspond to elements greater than or equal to $\min_r$ (see
Algorithm~\ref{alg:qss} for the definitions of $\max_\ell$ and
$\min_r$). It then recurses on both sides.
\begin{LONG}
\begin{INUTILE}
    In the process of answering the \texttt{select} queries, the
    algorithm \texttt{Quick Synergy MultiSelection} computes pivots
    that partition the elements of $\mathcal{M}$ into three multisets:
    the elements smaller than or equal to $\max_\ell$, the elements
    greater than or equal to $\min_r$, and the elements between
    $\max_\ell$ and $\min_r$. For this last set the algorithm
    \texttt{Quick Synergy MultiSelection} has already computed the
    ranks of its elements.
  \end{INUTILE}
See
  Algorithm~\ref{alg:qsms} for a formal description of the algorithm  \texttt{Quick
    Synergy MultiSelection}.

  \begin{algorithm} % enter the algorithm environment
    \caption{\texttt{Quick Synergy MultiSelection}} % give the algorithm a caption
    \label{alg:qsms} % and a label for \ref{} commands later in the document
    \begin{algorithmic}[1] % enter the algorithmic environment
      \REQUIRE A multiset $\mathcal{M}$ and a set $Q$ of $q$ offline
      \texttt{select} queries \ENSURE The $q$
      selected elements \STATE Compute the $\rho$ runs of respective
      sizes $(r_i)_{i\in[1..\rho]}$ in $\mathcal{M}$ such that
      $\sum^{\rho}_{i=1} r_i = n$; \STATE Compute the
      median $\mu$ of the middles of the $\rho$ runs, note
      $j\in[1..\rho]$ the run containing $\mu$; \STATE Perform
      doubling searches for the value $\mu$ in all runs except the
      $j$-th, starting at both ends of the runs in parallel; \STATE
      Find the maximum $\max_\ell$ (minimum $\min_r$) among the
      elements smaller (resp., greater) than $\mu$ in all runs except
      the $j$-th; \STATE Perform doubling searches for the values
      $\max_\ell$ and $\min_r$ in the $j$-th run, starting at the
      position of $\mu$; \STATE Compute the set of queries $Q_\ell$ that
      go to the left of $\max_\ell$ and the set of queries $Q_r$ that
      go to the right of $\min_r$; \STATE Recurse on the elements
      smaller than or equal to $\max_\ell$ and on the elements greater
      than or equal to $\min_r$ with the set of queries $Q_\ell$ and
      $Q_r$, respectively.
    \end{algorithmic}
  \end{algorithm}
\end{LONG}

We extend the notion of blocks to the context of partial
sorting. The idea is to consider consecutive blocks, which have not been
identified by the \texttt{Quick Synergy
  MultiSelection} algorithm, as a single block. We next introduce the definitions of \emph{pivot blocks} and
\emph{selection blocks}.

\begin{definition}[Pivot Blocks]
  Given a multiset $\mathcal{M}$ formed by $\rho$ runs and $\delta$
  blocks. The ``\emph{pivot blocks}'' are the blocks of $\mathcal{M}$
  that contain the pivots and the elements of value equals to the
  pivots during the steps of the algorithm \texttt{Quick Synergy
    MultiSelection}.
\end{definition}

In each run, between the pivot blocks and the insertion ranks of the
pivots, there are consecutive blocks that the algorithm \texttt{Quick Synergy
  MultiSelection} has not identified as separated blocks, because
no doubling searches occurred inside them.
 
\begin{definition}[Selection Blocks]
  Given the $i$-th run, formed of various blocks, and $q$
  \texttt{select} queries, the algorithm \texttt{Quick Synergy
    MultiSelection} computes $\xi$ pivots in the process of
  answering the $q$ queries. During the doubling searches,
  the algorithm \texttt{Quick Synergy MultiSelection} finds the insertion ranks
  of the $\xi$ pivots inside the $i$-th run. These positions determine
  a partition of size $\xi+1$ of the $i$-th run where each element of
  the partition is formed by consecutive blocks or is empty. We call
  the elements of this partition ``\emph{selection blocks}''. The
  set of all selection blocks include the set of all pivot blocks.
\end{definition}

Using these definitions, we generalize the results proven in Section~\ref{sec:ttqu-synergy-sort} to the more general problem of \textsc{MultiSelection}.

\begin{theorem}\label{theo:qsms}
  Given a multiset $\mathcal{M}$ of size $n$ formed by $\rho$ runs and
  $\delta$ blocks; and $q$ offline \texttt{select} queries over
  $\mathcal{M}$ corresponding to elements of \texttt{ranks} $r_1,
  \dots, r_q$. The algorithm \texttt{Quick Synergy MultiSelection} computes $\xi$ pivots in the process of answering the
  $q$ queries. Let $s_1,\dots, s_{\beta}$ be the sizes of the $\beta$
  selection blocks determined by these $\xi$ pivots in all runs. Let
  $m_1, \dots, m_\lambda$ be the numbers of pivot blocks
  \begin{LONG}
    among this selection blocks
  \end{LONG}
corresponding to the values of the $\lambda$ pivots
with multiplicity greater than 1, respectively.
Let $\rho_0, \dots, \rho_\xi$ be the sequence where $\rho_i$ is the
number of runs that have elements with values between the pivots $i$
and $i+1$ sorted by \texttt{ranks}, for $i\in[1..\xi]$.
  \begin{INUTILE}
    Among these blocks there are $m_1, \dots, m_\lambda$ pivot blocks
    corresponding to the $\chi$ pivots with multiplicity greater than
    1, respectively.
  \end{INUTILE}
  The algorithm \texttt{Quick Synergy MultiSelection} answers
  the $q$ \texttt{select} queries performing within
  $O\left(n + \sum^{\beta}_{i=1}\log{s_i} +
    \beta\log{\rho}-\sum^{\lambda}_{i=1}m_i\log{m_i} -
    \sum^{\xi}_{i=0}\rho_i\log{\rho_i}\right) \subseteq O\left(n\log{n} -
    \sum^{q}_{i=0}\Delta_i\log{\Delta_i}\right)$ data comparisons,
  where $\Delta_i = r_{i+1} - r_i$, $r_0=0$ and $r_{q+1}=n$\begin{INUTILE}
    and
    $O(\beta\log{\rho}) \subseteq O(\min(\xi\rho\log{\frac{n}{\rho}},
    \delta\log{n}))$
  \end{INUTILE}.
\end{theorem}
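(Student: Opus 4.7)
The plan is to mirror the structure of the proof for \texttt{Quick Synergy Sort}, adapting Lemma~\ref{lem:blocks} and Lemma~\ref{lem:delta} to the multiselection setting, where recursion is pruned on sub-instances that contain no query. The linear $n$ term accounts for the initial run detection, and the remaining work decomposes into the cost of doubling searches inside runs plus the cost of the ``median of the middles'' and $\max_\ell/\min_r$ steps.

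First I would prove a refined version of Lemma~\ref{lem:blocks} in which the leaves of the doubling-search tree associated with each run are \emph{selection blocks} rather than original blocks. The algorithm stops subdividing a region of a run precisely when no query lies on either side of a freshly discovered pivot, so the leaf set is exactly the selection-block decomposition. The induction of Lemma~\ref{lem:blocks} (``cost of a new split is within $4\log a$ where $a$ is the smaller side, and $2\lg(a+b)\le\lg a+2\lg b$ whenever $a\ge 4$ and $b\ge a$'') goes through verbatim, yielding $O\bigl(\sum_{i=1}^{\beta}\log s_i\bigr)$ over all runs.

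Next I would adapt Lemma~\ref{lem:delta} to bound the comparisons of the median-of-middles steps and the $\max_\ell/\min_r$ steps. At each recursive call the cost is linear in the number of \emph{currently active} runs, where a run becomes inactive as soon as the sub-instance carries no query. Two savings appear relative to the sorting case: (i) pivots that share a value of multiplicity $m_i>1$ are all located together as soon as the algorithm first hits that value, which yields a correction of $-\sum_{i=1}^{\lambda} m_i\log m_i$ and substitutes the raw $\log\binom{\rho}{m_i}$ term of the sorting analysis by its tighter multiselection counterpart; (ii) a sub-instance carrying $\rho_i$ active runs contributes at most $O(\rho_i\log\rho_i)$, not $O(\rho\log\rho)$, to the median-of-middles cost, giving the correction $-\sum_{i=0}^{\xi}\rho_i\log\rho_i$. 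I would carry out a joint induction over the selection-block partition $\pi$ and over the number of active runs: writing $\mathcal{T}(\pi,\rho)$ for the total cost, the recurrence $\mathcal{T}(\pi,\rho)=\mathcal{T}(\pi_\ell,\rho_\ell{+}b)+\mathcal{T}(\pi_r,\rho_r{+}b)+\rho$ is combined with the inequality $\log(1+y/x)^x\ge y$ for $y\le x$ (exactly as in Lemma~\ref{lem:delta}) in order to absorb the additive $\rho$ at the current node into the entropy of the two sides.

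Combining the refined lemmas with the run-detection cost yields the first bound. For the second inclusion into $O\bigl(n\log n-\sum_{i=0}^{q}\Delta_i\log\Delta_i\bigr)$, I would argue by a worst-case majorization: the selection blocks partition $\mathcal{M}$ into pieces whose sizes are dominated, in the entropy sense, by the gap sequence $(\Delta_0,\ldots,\Delta_q)$ induced by the queries in the sorted order, since every gap is a union of selection blocks. Applying the standard concavity bound $\sum_i \log s_i \le n\log n - \sum_i \Delta_i\log\Delta_i$ (up to lower-order terms), together with $\beta\log\rho-\sum m_i\log m_i-\sum\rho_i\log\rho_i \in O(n\log n - \sum \Delta_i\log\Delta_i)$, gives the stated inclusion and recovers the Dobkin--Munro bound as the gap-only special case. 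The main obstacle will be the bookkeeping in the refined Lemma~\ref{lem:delta}: the two corrections $\sum m_i\log m_i$ and $\sum\rho_i\log\rho_i$ must be tracked simultaneously through the recursion while still charging the doubling-search cost to the smaller of the two sides, as in the sorting analysis.
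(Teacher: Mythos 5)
Your plan shares the first step with the paper's proof (reusing Lemma~\ref{lem:blocks} to bound the doubling-search cost by $O(\sum_{i=1}^{\beta}\log s_i)$ on the selection-block decomposition), but diverges for the median-of-middles analysis. The paper does not re-run the induction of Lemma~\ref{lem:delta} with pruning; instead it argues \emph{subtractively}: it builds a hypothetical instance in which every pair of consecutive computed pivots brackets a multiset $\mathcal{M}_j$ whose selection blocks are already totally ordered (each smaller than the next), applies Lemma~\ref{lem:delta} as a black box to get $O(n + \sum\log s_i + \beta\log\rho - \sum m_i\log m_i)$ for fully sorting that hypothetical instance, then observes that the multiselection avoids exactly the work of interleaving the $\rho_i$ runs intersecting each $\mathcal{M}_j$, which costs $\Theta(\sum_{i=0}^{\xi}\rho_i\log\rho_i)$; the difference is the stated bound. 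This ``sort minus unsorted gaps'' decomposition is what lets the paper completely sidestep the obstacle you explicitly flag (tracking $\sum m_i\log m_i$ and $\sum\rho_i\log\rho_i$ simultaneously through a pruned recursion).

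Beyond the change of route, your point (ii) misidentifies what $\rho_i$ measures and in which direction it enters. In the theorem, $\rho_i$ is the number of runs with elements \emph{between} pivots $i$ and $i+1$, i.e.\ runs that the algorithm partitions at higher levels but then never interleaves because no query falls there. These are precisely the \emph{inactive} regions, not ``active runs,'' and they do not ``contribute $O(\rho_i\log\rho_i)$'' to the multiselection cost: on the contrary, $O(\rho_i\log\rho_i)$ is the interleaving work that a full sort would have spent there and that multiselection \emph{omits}, which is what produces the minus sign. As written, your explanation would make the term additive, so the sign and the identification need to be repaired before the direct-induction route can be carried out; alternatively, adopting the paper's two-step comparison-to-sorting argument avoids the issue entirely. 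Your majorization argument for the final inclusion into $O(n\log n - \sum_{i=0}^{q}\Delta_i\log\Delta_i)$ is a reasonable supplement; the paper states that inclusion without proof.
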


\begin{proof}
  \begin{VLONG}
    The doubling searches that find the insertion ranks of the pivots
    during the overall execution of the algorithm perform within
    $O(\sum^{\beta}_{i=1}\log{s_i})$ data comparisons
    \begin{LONG}. At each run, a constant factor of the sum of the logarithm of the
      sizes of the selection blocks bounds the number of
      data comparisons performed by these doubling searches
    \end{LONG}
    (see the proof of Lemma~\ref{lem:blocks} analyzing the algorithm
    \texttt{Quick Synergy Sort} for details).
  \end{VLONG}
  
  The pivots computed by the algorithm \texttt{Quick Synergy MultiSelection} for
  answering the queries are a subset of the pivots computed by
  the algorithm \texttt{Quick Synergy Sort} for sorting the whole multiset. Suppose
  that the selection blocks determined by every two consecutive pivots
  form a multiset $\mathcal{M}_j$ such that for every pair of selection
  blocks in $\mathcal{M}_j$ the elements of one are smaller than the
  elements of the other one.\begin{LONG}
    Consider the steps that compute the medians of the middles in the
    algorithm \texttt{Quick Synergy Sort}, the number of
    data comparisons performed by these steps would be within
  \end{LONG}
  \begin{SHORT}
    The algorithm \texttt{Quick Synergy Sort} would perform within
    $O\left(n + \sum^{\beta}_{i=1}\log{s_i} +
      \beta\log{\rho}-\sum^{\lambda}_{i=1}m_i\log{m_i}\right)$ data
    comparisons\end{SHORT}
  \begin{LONG}
    $O\left(n + \sum^{\beta}_{i=1}\log{s_i} +
      \beta\log{\rho}-\sum^{\lambda}_{i=1}m_i\log{m_i}\right)$  
  \end{LONG}
  in this supposed instance (see the proof of\begin{LONG}
    Lemmas~\ref{lem:delta} analyzing the algorithm
    \texttt{Quick Synergy Sort} for details).
  \end{LONG}\begin{SHORT}
    Lemmas~\ref{lem:blocks} and \ref{lem:delta} analyzing the
    algorithm \texttt{Quick Synergy Sort} for details).\end{SHORT}The number of
    comparisons needed to sort the multisets $\mathcal{M}_j$ is within
    $\Theta(\sum^{\xi}_{i=0}\rho_i\log{\rho_i})$.
   The result follows.
  \begin{LONG}\qed\end{LONG}
\end{proof}

The process of detecting the $\phi$ pre-existing pivot positions seen in Section~\ref{sec:global} can be applied as the first step of the multiselection algorithm. The $\phi$ pivot positions divide the input of size $n$ into $\phi+1$ sub-instances of sizes $n_0,\ldots,n_\phi$. For each sub-instance $I_i$ for $i\in[0..\phi]$, the multiselection algorithm determines which \texttt{select} queries correspond to $I_i$ and applies then the steps of the \begin{SHORT}\texttt{Quick Synergy MultiSelection}\end{SHORT}\begin{LONG}Algorithm~\ref{alg:qsms}\end{LONG} inside $I_i$ in order to answer these queries.

\begin{LONG}
  \begin{corollary}
    Let $\mathcal{M}$ be a multiset of size $n$ with $\phi$ pivot
    positions. The $\phi$ pivot positions divide $\mathcal{M}$ into
    $\phi+1$ sub-instances of sizes $n_0,\ldots,n_\phi$ (such that
    $\sum^{\phi}_{i=0} n_i = n$). Let $q$ be the number of offline
    \texttt{select} queries over $\mathcal{M}$, such that $q_i$
    queries correspond to the sub-instance $I_i$, for
    $i\in[0..\phi]$. In each sub-instance $I_i$ of size $n_i$ formed
    by $\rho_i$ runs, the algorithm \texttt{Quick Synergy
      MultiSelection} selects $\xi_i$ pivots when it answers the $q_i$
    queries. These $\xi_i$ pivots determine $\beta_i$ selection blocks
    of sizes $s_{i1}, \dots, s_{i\beta_i}$ inside $I_i$. Let
    $m_{i1}, \dots, m_{i\lambda_i}$ be the numbers of pivot blocks
    \begin{LONG}
      among this selection blocks
    \end{LONG}
    corresponding to the values of the $\lambda_i$ pivots with
    multiplicity greater than 1, respectively. Let
    $\rho_{i0}, \dots, \rho_{i\xi_i}$ be the sequence where $\rho_{ij}$ is the
    number of runs that have elements with values between the pivots
    $ij$ and $i(j+1)$ sorted by \texttt{ranks}, for $j\in[1..\xi_i]$. There
    is an algorithm that answers the $q$ offline \texttt{select}
    queries performing within
    $O(n + \sum^{\phi}_{i=0}\left\{
      \sum^{\beta_i}_{j=1}\log{s_{ij}} +
      \beta_i\log{\rho_i}-\sum^{\lambda_i}_{j=1}m_{ij}\log{m_{ij}} -
      \sum^{\xi_i}_{j=0}\rho_{ij}\log{\rho_{ij}}\right\})$ data comparisons.
  \end{corollary}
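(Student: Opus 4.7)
The plan is to show that the corollary follows from composing two results already established in the excerpt: the linear-time detection of pivot positions (the lemma at the start of Section~\ref{sec:global}) and the analysis of \texttt{Quick Synergy MultiSelection} on a single instance (Theorem~\ref{theo:qsms}). First I would run the \texttt{bubble-up}/\texttt{bubble-down} scan to identify the $\phi$ pivot positions of $\mathcal{M}$ in a linear number of comparisons, which partitions the input into the $\phi+1$ sub-instances $I_0,\ldots,I_\phi$ of sizes $n_0,\ldots,n_\phi$. The key structural observation, already used in the sorting corollary of Section~\ref{sec:global}, is that because the pivot positions guarantee that every element of $I_i$ is no greater than any element of $I_{i+1}$, the $j$-th smallest element of $\mathcal{M}$ lies entirely within exactly one sub-instance, whose index depends only on $j$ and the cumulative sizes $n_0+\cdots+n_i$.

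Next I would route the $q$ queries to their respective sub-instances. Given a query \texttt{select}$(r)$, determining the unique $i$ such that $\sum_{k<i} n_k < r \le \sum_{k\le i} n_k$ is a scan over the prefix sums; doing this for all $q$ queries at once takes $O(q+\phi) \subseteq O(n)$ comparisons by merging the queries (sorted or unsorted, we can afford a linear sweep) with the prefix-sum boundaries. The query becomes a local \texttt{select} query of rank $r - \sum_{k<i} n_k$ within $I_i$. After this routing step, each sub-instance $I_i$ has its own set of $q_i$ local offline queries, and the sub-instances are completely independent.

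Then I would invoke \texttt{Quick Synergy MultiSelection} independently on each sub-instance $I_i$ with its local query set. By Theorem~\ref{theo:qsms}, the number of data comparisons performed on $I_i$ is within
\[
O\!\left( n_i + \sum_{j=1}^{\beta_i} \log s_{ij} + \beta_i \log \rho_i - \sum_{j=1}^{\lambda_i} m_{ij}\log m_{ij} - \sum_{j=0}^{\xi_i} \rho_{ij}\log\rho_{ij} \right).
\]
Summing over $i \in [0..\phi]$ and using $\sum_{i=0}^{\phi} n_i = n$ absorbs both the pivot-detection cost and the routing cost into the $n$ term, yielding exactly the bound in the statement. The only delicate point, and what I expect to be the main obstacle, is making precise that the $\beta_i$, $\lambda_i$, $m_{ij}$, $\rho_i$, $\rho_{ij}$, and $s_{ij}$ computed by the algorithm when restricted to $I_i$ coincide with the quantities in the statement: this is immediate because runs do not cross pivot positions (a pivot position is, by definition, a boundary where the sorted order already agrees with the input order, so any run straddling such a position would contradict its definition up to a harmless boundary convention), so the block structure and run structure of each $I_i$ are simply the induced restrictions, and no cross-terms appear between different sub-instances.
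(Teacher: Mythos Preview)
Your proposal is correct and matches the paper's approach exactly: the paper states this corollary without proof, relying on the preceding paragraph which says to detect the pivot positions first, route each \texttt{select} query to its sub-instance, and then apply \texttt{Quick Synergy MultiSelection} (Theorem~\ref{theo:qsms}) independently on each $I_i$, summing the bounds. One small correction to your final parenthetical: runs \emph{can} cross pivot positions (in a fully sorted array every position is a pivot position yet there is a single run), but this is irrelevant to the argument, since the quantities $\rho_i,\beta_i,s_{ij},m_{ij},\rho_{ij}$ in the statement are \emph{defined} as those of the standalone sub-instance $I_i$, so there is no consistency check to perform and no cross-terms can arise.
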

\end{LONG}

In the result above, the queries are given all at the same time (i.e.,
offline). In the context where they arrive one at the time (i.e., online), we define
two \textsc{Deferred Data Structures} for answering online
\texttt{rank} and \texttt{select} queries, both inspired by the algorithm
\texttt{Quick Synergy MultiSelection}.

\section{\texttt{Rank} and \texttt{Select} Deferred Data
  Structures}\label{sec:online-synergy-defer}

We describe two \textsc{Deferred Data Structures} that answer a set of \texttt{rank} and \texttt{select} queries arriving one at the time over a multiset $\mathcal{M}$, progressively sorting $\mathcal{M}$.  
Both data structures take advantage of the order (local and global) and structure in the input, and of the structure in the queries.
The first data structure is in the RAM model of computation, at the cost of not taking advantage of the order in which the queries are given. The second data structure is in the comparison model (a more constrained model) but does take advantage of the query order.

\subsection{Taking Advantage of  Order and Structure in the Input, but only of Structure in the Queries}

Given a multiset $\mathcal{M}$ of size $n$, the \textsc{RAM Deferred
  Data Structure} is composed of a bitvector $\mathcal{A}$ of size
$n$, in which we mark the elements in $\mathcal{M}$ that have been
computed as pivots by the algorithm when it answers the online
queries; a dynamic predecessor and successor structure $\mathcal{B}$
over the bitvector $\mathcal{A}$, which allows us to find the two
successive pivots between which the query fits; and for each pivot $p$
found, the data structure stores pointers to the insertion ranks of
$p$ in each run, to the beginning and end of the block $g$ to which
$p$ belongs, and to the position of $p$ inside $g$. The dynamic
predecessor and successor structure $\mathcal{B}$ requires the RAM
model of computation in order to answer \emph{predecessor and
  successor queries} in time within
$o(\log{n})$~\cite{2002-JCSS-OptimalBoundsForThePredecessorProblemAndRelatedProblems-BeameFich}.

\begin{theorem}\label{theo:online-ram}
  Consider a multiset $\mathcal{M}$ of size $n$ formed by $\rho$ runs and
  $\delta$ blocks. The \textsc{RAM Deferred Data Structure} computes
  $\xi$ pivots in the process of answering $q$ online \texttt{rank}
  and \texttt{select} queries over $\mathcal{M}$.
  Let $s_1, \dots, s_{\beta}$ be the sizes of the $\beta$ selection
  blocks determined by these $\xi$ pivots in all runs.
  Let $m_1, \dots, m_\lambda$ be the numbers of pivot blocks
  \begin{LONG}
    among this selection blocks
  \end{LONG}
  corresponding to the values of the $\lambda$ pivots with
  multiplicity greater than 1, respectively.
  Let $\rho_0, \dots, \rho_\xi$ be the sequence where $\rho_i$ is the
  number of runs that have elements with values between the pivots $i$
  and $i+1$ sorted by \texttt{ranks}, for $i\in[1..\xi]$.
  Let $u$ and $g_1, \dots, g_u $ be the number of \texttt{rank}
  queries and the sizes of the identified and searched blocks in the
  process of answering the $u$ \texttt{rank} queries,
  respectively.
  The \textsc{RAM Deferred Data Structure} answers these $q$ online
  \texttt{rank} and \texttt{select} queries in time within
  $O(n + \sum^{\beta}_{i=1}\log{s_i} + \beta\log{\rho} -
  \sum^{\lambda}_{i=1}m_i\log{m_i} -
  \sum^{\xi}_{i=0}\rho_i\log{\rho_i} + \xi\log\log{n} +
  u\log{n}\log\log{n} + \sum^{u}_{i=1}\log{g_i})$.
\end{theorem}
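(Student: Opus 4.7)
The plan is to decompose the total running time into three contributions: the cost of the pivot-selection machinery inherited from \texttt{Quick Synergy MultiSelection}, the overhead of maintaining and querying the dynamic predecessor/successor structure $\mathcal{B}$, and the extra cost specific to \texttt{rank} queries.

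First, I would account for the pivot computations. Each arriving query triggers at most one ``descent'' that locates, via $\mathcal{B}$, the sub-interval currently delimited by two consecutive known pivots, and then executes median-of-middles and doubling-search steps identical to those of \texttt{Quick Synergy MultiSelection} on that sub-interval. Summed over the sequence of queries, the set of pivots produced is exactly $\{\mu_1,\ldots,\mu_\xi\}$, and the total number of comparisons charged to these steps equals the cost of an offline execution of \texttt{Quick Synergy MultiSelection} whose induced pivot set coincides with this one. Applying Theorem~\ref{theo:qsms} as a black box then yields the terms $n + \sum_{i=1}^{\beta}\log s_i + \beta\log\rho - \sum_{i=1}^{\lambda}m_i\log m_i - \sum_{i=0}^{\xi}\rho_i\log\rho_i$, where $n$ absorbs the initial linear run-detection, Lemma~\ref{lem:blocks} accounts for the doubling searches inside selection blocks, and Lemma~\ref{lem:delta} accounts for the median-of-middles and $\max_\ell/\min_r$ computations.

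Second, I would add the cost of $\mathcal{B}$. Implementing $\mathcal{B}$ as a $y$-fast trie or van Emde Boas tree over the universe $[1..n]$, each insertion of a newly computed pivot costs $O(\log\log n)$, so the $\xi$ pivot insertions contribute $O(\xi\log\log n)$. Every arriving query also issues $O(1)$ predecessor/successor lookups in $\mathcal{B}$ to locate its sub-interval, each at cost $O(\log\log n)$; these can be charged either to the neighbouring pivot creations (for \texttt{select} queries that trigger a descent) or to the \texttt{rank} term analysed next, and are therefore absorbed without inflating the bound.

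Third, I would analyse the $u$ \texttt{rank} queries separately. On input $v$, the data structure performs a predecessor lookup in $\mathcal{B}$ at cost $O(\log\log n)$, then, if $v$ is not already isolated between two consecutive pivots of equal value, determines its rank by searching each of the $\rho$ runs restricted to the current sub-interval. Using the run pointers stored for each pivot, this amounts to $O(\log n)$ comparisons paired with $O(\log\log n)$ predecessor lookups per run boundary, giving $O(\log n \cdot \log\log n)$ per query and $O(u\log n\log\log n)$ overall. Once the containing block of size $g_i$ is identified, a final search inside it costs $O(\log g_i)$, contributing the $\sum_{i=1}^{u}\log g_i$ term.

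The main obstacle will be making rigorous the claim that the online interleaving of queries and pivot computations incurs the same total comparison cost as an offline \texttt{Quick Synergy MultiSelection} whose query set induces the same pivot set. This requires mapping the online execution onto the analysis trees used in the proofs of Lemmas~\ref{lem:blocks} and~\ref{lem:delta} and verifying that the amortised charges still go through regardless of the order in which pivots are discovered, so that the bookkeeping overhead from $\mathcal{B}$ can be cleanly separated from the comparison cost and the three contributions added without double counting.
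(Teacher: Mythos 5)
Your proposal matches the paper's approach: it describes the \texttt{select} and \texttt{rank} query handlers, reduces the comparison cost of the pivot-computing steps to Theorem~\ref{theo:qsms}, and then adds the RAM overheads for the predecessor/successor structure and the \texttt{rank} searches. The paper's own proof is actually less explicit than yours about the term-by-term accounting --- it is essentially an operational description of the two query handlers with the cost attribution left implicit --- so your decomposition into (i) the \texttt{Quick Synergy MultiSelection} inheritance, (ii) the $O(\xi\log\log n)$ $\mathcal{B}$-maintenance, and (iii) the $O(u\log n\log\log n + \sum\log g_i)$ \texttt{rank} overhead is a faithful and somewhat more careful unpacking of the same argument. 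One small discrepancy in mechanism: the paper's \texttt{rank} handler performs a single binary search over positions of the bitvector $\mathcal{A}$ (interleaving $O(1)$ data comparisons with $O(\log\log n)$ predecessor/successor lookups per probe), rather than a search per run as you describe; both readings produce the same $O(\log n\log\log n)$ per-query bound. The ``main obstacle'' you flag, that the online interleaving of queries should cost no more than the offline \texttt{Quick Synergy MultiSelection} execution inducing the same pivot set, is indeed the crux of the reduction; the paper takes this for granted without further justification, so identifying it as the point that needs to be made rigorous is a useful observation rather than a gap in your reasoning.
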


\begin{proof}
  The algorithm answers a new \texttt{select}$(i)$ query by accessing
  in $\mathcal{A}$ the query position $i$. If $\mathcal{A}[i]$ is 1,
  then the element $e$ has been computed as pivot, and hence the
  algorithm answers the query in constant time by following the
  position of $e$ inside the block at which $e$ belongs. If
  $\mathcal{A}[i]$ is 0, then the algorithm finds the nearest pivots
  to its left and right using the predecessor and successor structure,
  $\mathcal{B}$. If the position $i$ is inside a block to which one of
  the two nearest pivots belong, then the algorithm answers the query
  and in turn finishes. If not, it then applies the same steps as the algorithm 
  \texttt{Quick Synergy MultiSelection} in order to answer the
  query; it updates the bitvector $\mathcal{A}$ and the dynamic
  predecessor and successor structure $\mathcal{B}$ whenever a new
  pivot is computed; and for each pivot $p$ computed, the structure
  stores the pointers to the insertion ranks of $p$ in each run, to
  the beginning and end of the block $g$ to which $p$ belongs, and to
  the position of $p$ inside $g$.

  The algorithm answers a new \texttt{rank}$(x)$ query by finding the
  \emph{selection block} $s_j$ in the $j$-th run such that $x$ is
  between the smallest and the greatest value of $s_j$ for all
  $j\in[1..\rho]$. For that, the algorithm performs a sort of parallel
  binary searches for the value $x$ at each run taking advantage of
  the pivots that have been computed by the algorithm. The algorithm
  accesses the position $\frac{n}{2}$ in $\mathcal{A}$. If
  $\mathcal{A}[\frac{n}{2}]$ is 1, then the element $e$ of
  \texttt{rank} $\frac{n}{2}$ has been computed as pivot. Following
  the pointer to the block $g$ to which $e$ belongs, the algorithm
  decides if $x$ is to the right, to the left or inside $g$ by
  performing a constant number of data comparisons. In the last case,
  a binary search for the value $x$ inside $g$ yields the answer of
  the query. If $\mathcal{A}[\frac{n}{2}]$ is 0, then the algorithm
  finds the nearest pivots to the left and right of the position
  $\frac{n}{2}$ using the predecessor and successor structure,
  $\mathcal{B}$. Following the pointers to the blocks that contain
  these pivots the algorithm decides if $x$ is inside one of these
  blocks, to the right of the rightmost block, to the left of the
  leftmost block, or between these two blocks. In the last case, the
  algorithm applies the same steps as the algorithm \texttt{Quick
    Synergy MultiSelection} in order to compute the median $\mu$ of
  the middles and partitions the selection blocks by $\mu$. The
  algorithm then decides to which side $x$ belongs.
  \begin{LONG} These steps identify several
    new pivots, and in consequence several new blocks in the structure.
    \qed
  \end{LONG}
\end{proof}

The \textsc{RAM Deferred Data Structure} includes the pivot positions
(seen in Section~\ref{sec:global}) as a natural extension of the
algorithm. The $\phi$ pivot positions are marked in the bitvector
$\mathcal{A}$. For each pivot position $p$, the structure stores
pointers to the end of the runs detected on the left of $p$; to the
beginning of the runs detected on the right of $p$; and to the
position of $p$ in the multiset.

\begin{LONG}
  \begin{corollary}
    Let $\mathcal{M}$ be a multiset of size $n$ with $\phi$ pivot
    positions. The $\phi$ pivot positions divide $\mathcal{M}$ into
    $\phi+1$ sub-instances of sizes $n_0,\ldots,n_\phi$ (such that
    $\sum^{\phi}_{i=0} n_i = n$). Let $q$ be the number of online
    \texttt{rank} and \texttt{select} queries over $\mathcal{M}$, such
    that $q_i$ queries correspond to the sub-instance $I_i$, for
    $i\in[0..\phi]$.
    In each sub-instance $I_i$ of size $n_i$ formed by $\rho_i$ runs,
    the \textsc{RAM Deferred Data Structure} selects $\xi_i$ pivots in
    the process of answering the $q_i$ online \texttt{rank} and
    \texttt{select} queries over $I_i$.
    Let $s_{i1}, s_{i2}, \dots, s_{i\beta_i}$ be the sizes of the $\beta_i$
    selection blocks determined by the $\xi_i$ pivots in all runs of
    $I_i$.
    Let $m_{1i}, \dots, m_{i\lambda_i}$ be the numbers of pivot blocks
    \begin{LONG}
      among this selection blocks
    \end{LONG}
    corresponding to the values of the $\lambda_i$ pivots with
    multiplicity greater than 1, respectively.
    Let $\rho_{i0}, \dots, \rho_{i\xi_i}$ be the sequence where $\rho_{ij}$ is the
    number of runs that have elements with values between the pivots $ij$
    and $i(j+1)$ sorted by \texttt{ranks}, for $j\in[1..\xi_i]$.
    Let $u_i$ and $g_{i1}, \dots, g_{iu_i}$ be the number of \texttt{rank}
    queries and the sizes of the identified and searched blocks in the
    process of answering the $u_i$ \texttt{rank} queries over $I_i$,
    respectively.
    There exists an algorithm that answers these $q$ online
    \texttt{rank} and \texttt{select} queries in time within
    $O(n + \sum^{\phi}_{i=0}\left\{\beta_i\log{\rho_i} -
      \sum^{\lambda_i}_{j=1}m_{ij}\log{m_{ij}} -
      \sum^{\xi_i}_{j=0}\rho_{ij}\log{\rho_{ij}} + \xi_i\log\log{n_i} +
      u\log{n_i}\log\log{n_i} + \sum^{u_i}_{j=1}\log{g_{ij}}\right\})$.
  \end{corollary}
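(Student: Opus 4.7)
The plan is to bootstrap directly from Theorem~\ref{theo:online-ram} by using the $\phi$ pre\hyp existing pivot positions to partition the problem into $\phi+1$ independent sub\hyp problems, and then to argue that running the \textsc{RAM Deferred Data Structure} independently inside each sub\hyp instance yields the claimed bound with only an additive $O(n)$ overhead for the global preprocessing.

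First, I would detect the $\phi$ pivot positions in linear time by one \texttt{bubble-up} pass followed by one \texttt{bubble-down} pass, as in the lemma of Section~\ref{sec:global}; the positions that never swap in either pass are exactly the pivot positions. Using the same scans, I would also compute the run decomposition of each sub\hyp instance $I_i$. I then extend the bitvector $\mathcal{A}$ and the dynamic predecessor/successor structure $\mathcal{B}$ so that the $\phi$ pivot positions are permanently marked as ``pivots'' from the outset, each carrying pointers to the end of the last run to its left, to the start of the first run to its right, and to its own position in $\mathcal{M}$. Because a pivot position $p$ by definition separates values, every element in $I_{<i}$ is $\le$ every element in $I_{\ge i}$, so the sub\hyp instances can be treated as fully independent online problems.

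Given a new \texttt{rank} or \texttt{select} query, I first use $\mathcal{B}$ to locate, in time $O(\log\log n)$, the sub\hyp instance $I_i$ to which the query belongs: for \texttt{select}$(j)$ I find the pair of consecutive pivot positions bracketing $j$, and for \texttt{rank}$(x)$ I probe the values stored at the pivot positions to identify the sub\hyp instance by binary/predecessor search. I then dispatch the query to the \textsc{RAM Deferred Data Structure} associated with $I_i$, which executes exactly as in Theorem~\ref{theo:online-ram}, restricted to $I_i$. By that theorem, the total work spent processing the $q_i$ queries inside $I_i$ is bounded by
\[
O\!\left( n_i + \sum_{j=1}^{\beta_i}\log s_{ij} + \beta_i\log\rho_i - \sum_{j=1}^{\lambda_i} m_{ij}\log m_{ij} - \sum_{j=0}^{\xi_i}\rho_{ij}\log\rho_{ij} + \xi_i\log\log n_i + u_i\log n_i\log\log n_i + \sum_{j=1}^{u_i}\log g_{ij} \right).
\]
Summing over $i\in[0..\phi]$, the linear terms collapse since $\sum_i n_i = n$, and the logarithmic terms simply accumulate, matching the bound in the statement; the additional $O(\phi)$ routing and preprocessing overhead is absorbed by the $O(n)$ term since $\phi\le n$.

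The one step that deserves care is the correctness of query routing and of the cost accounting across sub\hyp instances. Routing correctness follows from the defining property of pivot positions, which guarantees that for \texttt{rank}$(x)$ the answer equals the rank of $x$ inside the unique $I_i$ whose value range contains $x$, plus the cumulative size $\sum_{k<i} n_k$ stored at the pivot positions; for \texttt{select}$(j)$ the same offsets reduce the query to a local \texttt{select} inside $I_i$. The cost accounting is clean because no comparison made inside $I_i$ ever involves an element of another sub\hyp instance, so the charges from Theorem~\ref{theo:online-ram} apply verbatim within each $I_i$ and never double\hyp count. The main obstacle, and the only place where a subtlety arises, is making sure that the routing cost per query, $O(\log\log n)$, is dominated by the $\xi_i\log\log n_i$ and $u_i\log n_i\log\log n_i$ terms already present in the per\hyp sub\hyp instance bound; this holds because every query either triggers at least one new pivot computation (in which case it is charged to $\xi_i$) or is a \texttt{rank} query (charged to $u_i$) or is answered directly from the stored pointers in $O(\log\log n)$, and the latter case can be charged to the $O(n)$ preprocessing term once we observe that the number of such ``free'' answers per pivot is $O(1)$ amortized.
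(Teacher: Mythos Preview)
Your approach is essentially the paper's own: the paper does not give a separate proof for this corollary but simply states, in the paragraph immediately preceding it, that the $\phi$ pivot positions are marked in the bitvector $\mathcal{A}$ (with pointers to the adjacent run boundaries and to the position in $\mathcal{M}$), after which the corollary is asserted as a direct consequence of Theorem~\ref{theo:online-ram} applied inside each sub\hyp instance. Your write\hyp up fleshes out the routing and summation that the paper leaves implicit; the only superfluous part is the final amortization argument about ``free'' \texttt{select} answers, which is both unnecessary for matching the stated bound and not quite right as phrased (repeated identical queries would each incur an $O(\log\log n)$ lookup without creating a new pivot), but this is an artifact already present in the granularity of the theorem's own bound rather than a defect of your reduction.
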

\end{LONG}

The \textsc{RAM Deferred Data Structure} takes advantage of the
structure in the queries and of the structure and order (local and
global) in the input. Changing the order in the \texttt{rank} and
\texttt{select} queries does not affect the time complexity of the
\textsc{RAM Deferred Data Structure}.\begin{LONG} Once the structure
  identifies the nearest pivots to the left and right of the query
  positions, the steps of the algorithms are the same as in the
  offline case (Section~\ref{sec:synergy-deferr-data}).\end{LONG} We
next describe a deferred data structure taking advantage of the
structure and order in the queries and of the structure and order
(local and global) in the input data.

\subsection{Taking Advantage of the Order and Structure in both the Input and the Queries}

To take advantage of the order in the queries, we introduce a
data structure that finds the nearest pivots to the left and to the
right of a position $p\in[1..n]$, while taking advantage of the
distance between the position of the last computed pivot and $p$. This
distance is measured in the number of computed pivots between the two
positions. For that we use a \emph{finger search
  tree}\begin{SHORT}~\cite{1998-SODA-FingerSearchTreesWithConstantInsertionTime-Brodal}\end{SHORT}\begin{LONG}~\cite{1977-STOC-ANewRepresentationForLinearLists-GuibasMcCreightPlassRoberts}\end{LONG}
which is a search tree maintaining \emph{fingers} (i.e., pointers) to
elements in the search tree. Finger search trees support efficient
updates and searches in the vicinity of the
fingers. Brodal~\cite{1998-SODA-FingerSearchTreesWithConstantInsertionTime-Brodal}
described an implementation of finger search trees that
searches for an element $x$, starting the search at the element given by
the finger $f$ in time within $O(\log{d})$, where $d$ is the distance
between $x$ and $f$ in the set (i.e, the difference between
\texttt{rank}$(x)$ and \texttt{rank}$(f)$ in the set). This operation
returns a finger to $x$ if $x$ is contained in the set, otherwise a
finger to the largest element smaller than $x$ in the set. This
implementation supports the insertion of an element $x$ immediately to
the left or to the right of a finger in \begin{LONG}worst-case\end{LONG} constant
time.

In the description of the \textsc{RAM Deferred Data Structure} from
Theorem~\ref{theo:online-ram}, we substitute the dynamic predecessor
and successor structure $\mathcal{B}$ by a finger search tree
$\mathcal{F}_{\texttt{select}}$, as described by
Brodal~\cite{1998-SODA-FingerSearchTreesWithConstantInsertionTime-Brodal}. Once
a block $g$ is identified, every element in $g$ is a valid pivot for
the rest of the elements in $\mathcal{M}$. In order to capture this
idea, we modify the structure $\mathcal{F}_{\texttt{select}}$ so that
it contains blocks (i.e., a sequence of consecutive values) instead of
singleton pivots. Each element in $\mathcal{F}_{\texttt{select}}$
points in $\mathcal{M}$ to the beginning and the end of the block $g$
that it represents and in each run to the position where the elements
of $g$ partition the run. This modification allows the structure to
answer \texttt{select} queries, taking advantage of the structure and
order in the queries and of the structure and order of the input
data. But in order to answer \texttt{rank} queries taking advantage of
the features in the queries and the input data, the structure needs
another finger search tree $\mathcal{F}_{\texttt{rank}}$. In
$\mathcal{F}_{\texttt{rank}}$ the structure stores for each block $g$
identified, the value of one of the elements in $g$, and pointers in
$\mathcal{M}$ to the beginning and the end of $g$ and in each run to
the position where the elements of $g$ partition the run. We name this
structure \textsc{Full-Synergistic Deferred Data Structure}.
  
   \begin{INUTILE}
     There exists a Deferred Data Structure that answers $q$ online
     \texttt{select} queries over a multiset $\mathcal{M}$ formed by
     $\rho$ runs of sizes $n_1, n_2, \dots, n_\rho$ performing within
     $O(q\sum^{\rho}_{i=1}{\log^2{\frac{n_i}{q}}} +
     q\log{\rho}\sum^{\rho}_{i=1}\log{\frac{n_i}{q}})$ data
     comparisons in the worst case over $q$, $\rho$ and
     $n_1, n_2, \dots, n_\rho$ fix.
   \end{INUTILE}

\begin{theorem}\label{theo:finger}
  Consider a multiset $\mathcal{M}$ of size $n$ formed by $\rho$ runs and
  $\delta$ blocks. The \textsc{Full-Synergistic Deferred Data
    Structure} identifies $\gamma$ blocks in the process of answering
  $q$ online \texttt{rank} and \texttt{select} queries over
  $\mathcal{M}$.
  The $q$ queries correspond to elements of \texttt{ranks} $r_1,
  \dots, r_q$.
  Let $s_1, \dots, s_{\beta}$ be the sizes of the $\beta$
  selection blocks determined by the $\gamma$ blocks in all runs.
  Let $m_1, \dots, m_\lambda$ be the numbers of pivot blocks
  \begin{LONG}
    among this selection blocks
  \end{LONG}
  corresponding to the values of the $\lambda$ pivots with
  multiplicity greater than 1, respectively.
  Let $\rho_0, \dots, \rho_\xi$ be the sequence where $\rho_i$ is the
  number of runs that have elements with values between the pivots $i$
  and $i+1$ sorted by \texttt{ranks}, for $i\in[1..\xi]$.
  Let $d_1, \dots, d_{q-1}$ be the sequence where $d_j$ is the number
  of identified blocks between the block that answers the $j-1$-th
  query and the one that answers the $j$-th query before starting the
  steps to answer the $j$-th query, for $j\in[2..q]$.
  Let $u$ and $g_1, \dots, g_u $ be the number of \texttt{rank} queries and the
  sizes of the identified and searched blocks in the process of
  answering the $u$ \texttt{rank} queries, respectively. The
  \textsc{Full-Synergistic Deferred Data Structure} answers the $q$
  online
  \begin{LONG}
    \texttt{rank} and \texttt{select}
  \end{LONG}
queries performing within
  $O(n + \sum^{\beta}_{i=1}\log{s_i} + \beta\log{\rho} -
  \sum^{\lambda}_{i=1}m_i\log{m_i} -
  \sum^{\xi}_{i=0}\rho_i\log{\rho_i} + \sum^{q-1}_{i=1}\log{d_i} +
  \sum^{u}_{i=1}\log{g_i}) \subseteq O\left(n\log{n} -
    \sum^{q}_{i=0}\Delta_i\log{\Delta_i} + q\log{n}\right)$ data
  comparisons, where $\Delta_i = r_{i+1} - r_i$, $r_0=0$ and $r_{q+1}=n$.
\end{theorem}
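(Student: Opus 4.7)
The plan is to build on the analysis of \texttt{Quick Synergy MultiSelection} (Theorem~\ref{theo:qsms}) and on the proof of Theorem~\ref{theo:online-ram}, modifying only the components that change when the dynamic predecessor/successor structure $\mathcal{B}$ is replaced by the two finger search trees $\mathcal{F}_{\texttt{select}}$ and $\mathcal{F}_{\texttt{rank}}$. The overall bound decomposes into "sorting-like" combinatorial work triggered by the queries and per-query navigation work, and I would analyse these two buckets separately.

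First I would argue that the combinatorial work spent identifying new pivots and selection blocks in response to queries is bounded exactly as in Theorem~\ref{theo:qsms}. The pivots computed by the data structure form a subset of those that a full execution of \texttt{Quick Synergy Sort} would compute; the doubling searches inside each run still pay only $O(\sum_{i=1}^{\beta}\log s_i)$ by Lemma~\ref{lem:blocks}; and the median-of-middles steps still pay $O(\beta\log\rho - \sum_{i=1}^{\lambda} m_i\log m_i - \sum_{i=0}^{\xi}\rho_i\log\rho_i)$ by Lemma~\ref{lem:delta} combined with the sorting refinement used in the proof of Theorem~\ref{theo:qsms}. This yields the first four summands of the claimed bound.

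Next I would analyse the per-query navigation cost. A \texttt{select}$(i)$ query navigates $\mathcal{F}_{\texttt{select}}$ starting from the finger left by the previously answered query, until it reaches the pair of adjacent identified blocks between which $i$ falls; by Brodal's finger search bound this navigation costs $O(\log d_j)$ comparisons, where $d_j$ is precisely the number of identified blocks between the two successive query targets. Inserting each newly identified block immediately next to the current finger takes worst-case $O(1)$ time and is therefore absorbed into the combinatorial work above. A \texttt{rank}$(x)$ query navigates $\mathcal{F}_{\texttt{rank}}$ analogously; once the relevant identified block $g$ is located, a binary search for $x$ inside $g$ adds $O(\log |g|)$ comparisons, contributing $\sum_{i=1}^{u}\log g_i$ in total.

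The main obstacle will be verifying that the \texttt{rank}$(x)$ routine, which emulates a "parallel binary search" across runs via the finger tree, never pays for the same refinement twice: each median-of-middles step triggered inside a pair of adjacent identified blocks produces a pivot that is inserted into both finger trees in constant time, and the selection blocks it creates are charged under Lemmas~\ref{lem:blocks} and~\ref{lem:delta} exactly as for \texttt{select} queries, so no cost is double counted between the navigation and combinatorial buckets. Finally, the loose upper bound $O(n\log n - \sum_{i=0}^{q}\Delta_i\log\Delta_i + q\log n)$ follows by the same convexity-style weakening used at the end of Theorem~\ref{theo:qsms}, combined with the trivial estimates $\sum_{i=1}^{q-1}\log d_i \in O(q\log n)$ and $\sum_{i=1}^{u}\log g_i \in O(q\log n)$.
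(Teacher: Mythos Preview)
Your proposal is correct and follows essentially the same approach as the paper: both arguments reuse the cost analysis of \texttt{Quick Synergy MultiSelection} (Theorem~\ref{theo:qsms}, via Lemmas~\ref{lem:blocks} and~\ref{lem:delta}) for the pivot-identification work, invoke Brodal's finger-search bound to charge each query's navigation in $\mathcal{F}_{\texttt{select}}$ or $\mathcal{F}_{\texttt{rank}}$ as $O(\log d_j)$, use constant-time insertion of newly identified blocks next to the current finger, and finish \texttt{rank} queries with a binary search contributing $O(\log g_i)$. Your explicit separation into ``combinatorial'' and ``navigation'' buckets, and your remark that no refinement is double-counted, make the accounting somewhat more transparent than the paper's more descriptive proof, but the substance is the same.
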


\begin{proof}
  The steps for answering a new \texttt{select}$(i)$ query are the
  same as the above description except when the algorithm searches for
  the nearest pivots to the left and right of the query position
  $i$. In this case, the algorithm searches for the position $i$ in
  $\mathcal{F}_{\texttt{select}}$. If $i$ is contained in an element
  of $\mathcal{F}_{\texttt{select}}$, then the block $g$ that contains
  the element in the position $i$ has already been identified. If $i$
  is not contained in an element of $\mathcal{F}_{\texttt{select}}$,
  then the returned finger $f$ points the nearest block $b$ to the
  left of $i$. The block that follows $f$ in
  $\mathcal{F}_{\texttt{select}}$ is the nearest block to the right of
  $i$. Given $f$, the algorithm inserts in
  $\mathcal{F}_{\texttt{select}}$ each block identified in the process
  of answering the query in constant time and stores the respective
  pointers to positions in $\mathcal{M}$.  In
  $\mathcal{F}_{\texttt{rank}}$ the algorithm searches for the value
  of one of the elements in $g$ or $b$. Once the algorithm obtains
  the finger returned by this search, the algorithm inserts in
  $\mathcal{F}_{\texttt{rank}}$ the value of one of the elements of
  each block identified in constant time and stores the
  respective pointers to positions in $\mathcal{M}$.

  The algorithm answers a new \texttt{rank}$(x)$ query by finding the
  \emph{selection block} $s_j$ in the $j$-th run such that $x$ is
  between the smallest and the greatest value of $s_j$ for all
  $j\in[1..\rho]$\begin{LONG}
    , similar to the steps of the \textsc{RAM Deferred Data Structure}
    for answering the query
  \end{LONG}. For that the algorithm searches for the value $x$ in
  $\mathcal{F}_{\texttt{rank}}$. The number of data comparisons
  performed by this searching process is within $O(\log{d})$, where
  $d$ is the number of blocks in $\mathcal{F}_{\texttt{rank}}$ between
  the last inserted or searched block and returned finger $f$. Given
  the finger $f$, there are three possibilities for the \texttt{rank}
  $r$ of $x$: (i) $r$ is between the \texttt{ranks} of the elements at
  the beginning and the end of the block pointed by $f$, (ii) $r$ is
  between the \texttt{ranks} of the elements at the beginning and the
  end of the block pointed by the finger following $f$, or (iii) $r$
  is between the \texttt{ranks} of the elements in the selection
  blocks determined by $f$ and the finger following $f$.  In the cases
  (i) and (ii), a binary search inside the block yields the answer of
  the query. In case (iii), the algorithm applies the same steps as
  the algorithm \texttt{Quick Synergy MultiSelection} in order to compute the
  median $\mu$ of the middles and partitions the selection blocks by
  $\mu$. The algorithm then decides to which side $x$ belongs.
  \begin{LONG}
    These doubling searches identify two new blocks in the
    structure, the block that contains the greatest element smaller
    than or equal to $x$ in $\mathcal{M}$ and the block that contains
    the smallest element greater than $x$ in $\mathcal{M}$.
  \end{LONG}
  \begin{LONG}
    Once compute \texttt{rank}($x$), the algorithm searches for this
    value in $\mathcal{F}_{\texttt{select}}$. It inserts then in
    $\mathcal{F}_{\texttt{select}}$ the block that contains the
    greatest element smaller than or equal to $x$ and
    the block that contains the smallest element greater than $x$. \qed
  \end{LONG}
\end{proof}

The process of detecting the $\phi$ pivot positions seen in
Section~\ref{sec:global} allows the \textsc{Full-Synergistic Deferred
  Data Structure} to insert these pivots in
$\mathcal{F}_{\texttt{select}}$ and $\mathcal{F}_{\texttt{rank}}$. For
each pivot position $p$ in $\mathcal{F}_{\texttt{select}}$ and
$\mathcal{F}_{\texttt{rank}}$, the structure stores pointers to the
end of the runs detected on the left of $p$; to the beginning of the
runs detected on the right of $p$; and to the position of $p$ in the
multiset.\begin{LONG}
  \begin{corollary}
    Let $\mathcal{M}$ be a multiset of size $n$ with $\phi$ pivot
    positions.  The $\phi$ pivot positions divide $\mathcal{M}$ into
    $\phi+1$ sub-instances of size $n_0,\ldots,n_\phi$ (such that
    $\sum^{\phi}_{i=0} n_i = n$).
    Let $q$ be the number of online \texttt{rank} and \texttt{select}
    queries over $\mathcal{M}$, such that $q_i$ queries correspond to
    the sub-instance $I_i$, for $i\in[0..\phi]$.
    In each sub-instance $I_i$ of size $n_i$ formed by $\rho_i$ runs, the
    \textsc{Full-Synergistic Deferred Data Structure} identifies
    $\gamma_i$ blocks in the process of answering $q_i$ online
    \texttt{rank} and \texttt{select} queries over $I_i$.
    Let $s_{i1}, s_{i2}, \dots, s_{i\beta_i}$ be the sizes of the $\beta_i$
    selection blocks determined by the $\gamma_i$ blocks in all runs
    of $I_i$.
    Let $m_{1i}, \dots, m_{i\lambda_i}$ be the numbers of pivot blocks
    \begin{LONG}
      among this selection blocks
    \end{LONG}
    corresponding to the values of the $\lambda_i$ pivots with
    multiplicity greater than 1, respectively.
    Let $\rho_{i0}, \dots, \rho_{i\xi_i}$ be the sequence where $\rho_{ij}$ is the
    number of runs that have elements with values between the pivots $ij$
    and $i(j+1)$ sorted by \texttt{ranks}, for $j\in[1..\xi_i]$.
    Let $d_{i1}, d_{i2}, \dots, d_{iq_{i-1}}$ be the sequence where
    $d_{ij}$ is the number of identified blocks between the block that
    answers the $ij-1$-th query and the one that answers the $ij$-th
    query before starting the steps for answering the $ij$-th query,
    for $j\in[2..q_i]$.
    Let $u_i$ and $g_{i1}, \dots, g_{iu_i}$ be the number of \texttt{rank}
    queries and the sizes of the identified and searched blocks in the
    process of answering the $u_i$ \texttt{rank} queries over $I_i$,
    respectively.
    There exists an algorithm that answers the $q$ online
    \texttt{rank} and \texttt{select} queries performing within
    $O(n + \sum^{\phi}_{i=0} \left\{\sum^{\beta_i}_{j=1}\log{s_{ij}} + \beta_i\log{\rho_i} -
      \sum^{\lambda_i}_{j=1}m_{ij}\log{m_{ij}} -
      \sum^{\xi_i}_{j=0}\rho_{ij}\log{\rho_{ij}} + \sum^{q_i-1}_{j=1}\log{d_{ij}} +
      \sum^{u_i}_{j=1}\log{g_{ij}}\right\})$ data comparisons.
  \end{corollary}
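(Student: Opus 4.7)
The plan is to reduce the problem to $\phi+1$ independent applications of Theorem~\ref{theo:finger}, by using the linear-time detection of pivot positions from Section~\ref{sec:global} as a preprocessing step, and then ``gluing'' the resulting analyses through the finger search trees $\mathcal{F}_{\texttt{select}}$ and $\mathcal{F}_{\texttt{rank}}$.

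First, I would run the \texttt{bubble-up}/\texttt{bubble-down} scan to detect the $\phi$ pivot positions $p_1,\dots,p_\phi$ in linear time (this accounts for the leading $O(n)$ term in the bound, and also yields the decomposition of $\mathcal{M}$ into sub-instances $I_0,\dots,I_\phi$ of sizes $n_0,\dots,n_\phi$, together with the runs inside each $I_i$). I would insert each $p_j$ as an initial element of both $\mathcal{F}_{\texttt{select}}$ and $\mathcal{F}_{\texttt{rank}}$, with the associated pointers to the end of the runs on its left, the beginning of the runs on its right, and its own position in $\mathcal{M}$, exactly as described in the paragraph preceding the corollary. Because $\phi \le n$ and insertions at the extremity of a finger search tree take constant amortized time, this preprocessing costs $O(n)$ overall.

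Then, to answer an online \texttt{rank} or \texttt{select} query I would first use the finger search structures to locate the unique sub-instance $I_i$ to which the query belongs: a \texttt{select}($k$) query is routed by a finger search in $\mathcal{F}_{\texttt{select}}$ for the position $k$, whose nearest pivot-position neighbours in $\mathcal{F}_{\texttt{select}}$ identify $I_i$; a \texttt{rank}($x$) query is routed analogously through $\mathcal{F}_{\texttt{rank}}$. Once $I_i$ is identified, all subsequent work for that query is carried out entirely inside $I_i$ by executing the steps of the \textsc{Full-Synergistic Deferred Data Structure} of Theorem~\ref{theo:finger}, using only the portions of $\mathcal{F}_{\texttt{select}}$ and $\mathcal{F}_{\texttt{rank}}$ that lie between the two pivot positions bounding $I_i$. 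Since every pivot computed inside $I_i$ stays strictly between $p_{i-1}$ and $p_i$, the blocks and selection blocks identified inside $I_i$ never interact with those of any other $I_j$.

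Summing the bound of Theorem~\ref{theo:finger} applied independently to each $I_i$, together with the $O(n)$ preprocessing cost and the fact that the finger search for routing a query to its sub-instance is absorbed into the $\sum_j \log d_{ij}$ term of the sub-instance containing it, yields the stated complexity. The main obstacle in this plan is checking that the finger distances $d_{ij}$ accumulated across sub-instances do indeed add up to at most the claimed sum and not to a strictly larger quantity: a finger search launched after a query that landed in $I_i$ and followed by one landing in $I_{i'}$ with $i\ne i'$ must be charged correctly. This is handled by observing that the fingers are maintained per tree (not per sub-instance), so each consecutive pair of queries contributes exactly one term $\log d_{ij}$ to the complexity of the sub-instance of the latter query, which is precisely what is being summed in the statement.
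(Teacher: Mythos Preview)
Your proposal is correct and matches the paper's approach. The paper does not give an explicit proof of this corollary: it is stated as an immediate consequence of Theorem~\ref{theo:finger} and of the paragraph immediately preceding it, which describes inserting the $\phi$ pivot positions into $\mathcal{F}_{\texttt{select}}$ and $\mathcal{F}_{\texttt{rank}}$ together with the run-boundary pointers. Your reconstruction---linear-time detection of the pivot positions via the \texttt{bubble-up}/\texttt{bubble-down} scan of Section~\ref{sec:global}, insertion of these positions into the two finger search trees, routing each online query to its sub-instance, and then summing the bound of Theorem~\ref{theo:finger} over the $\phi+1$ sub-instances---is exactly what the paper intends.

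The one point where you go beyond the paper is your discussion of the cross-sub-instance finger distances. The paper simply does not address this subtlety; your observation that the trees are global (so each consecutive pair of queries contributes a single $\log d$ term charged to the sub-instance of the later query) is a reasonable reading of the statement, though note that the corollary's definition of $d_{ij}$ is itself ambiguous on this point and the paper offers no clarification.
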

\end{LONG}
\begin{LONG}
  The \textsc{Full-Synergistic Deferred Data Structure} has two
  advantages over the \textsc{RAM Deferred Data Structure}: (i) it is
  in the pointer machine model of computation, which is less powerful
  than the RAM model; and (ii) it takes advantage of the structure and
  order in the queries and of the structure and order (local and
  global) in the input, when the \textsc{RAM Deferred Data Structure}
  does not take advantage of the order in the queries.
\end{LONG}\begin{VLONG}
  Next, we present two compressed data structures, taking advantage of
  the block representation of a multiset $\mathcal{M}$ while
  supporting the operators \texttt{rank} and \texttt{select} on $\mathcal{M}$.
\end{VLONG}\begin{VLONG}
  \section{Compressed Data Structures for \texttt{Rank} and
    \texttt{Select}}
  \label{sec:compressed}

  We describe two compressed representations of a multiset
  $\mathcal{M}$ of size $n$ formed by $\rho$ runs and $\delta$ blocks
  while supporting the operators \texttt{rank} and \texttt{select} on
  it. The first compressed data structure represents $\mathcal{M}$ in
  $\delta\log{\rho} + 3n + o(\delta\log{\rho}+n)$ bits and supports
  each \texttt{rank} query in constant time and each \texttt{select}
  query in time within $O(\log\log\rho)$. The second compressed data
  structure represents $\mathcal{M}$ in
  $\delta\log{\delta} + 2n + O(\delta\log\log{\delta}) + o(n)$ bits
  and supports each \texttt{select} query in constant time and each
  \texttt{rank} query in time within
  $O\left(\frac{\log{\delta}}{\log\log{\delta}}\right)$.

  Given a bitvector $\mathcal{V}$, \texttt{rank}$_1$($\mathcal{V}$,
  $j$) finds the number of occurrences of bit 1 in
  $\mathcal{V}[0..j]$, and \texttt{select}$_1$($\mathcal{V}$, $i$)
  finds the position of the $i$-th occurrence of bit 1 in
  $\mathcal{V}$. Given a sequence $\mathcal{S}$ from an alphabet of
  size $\rho$, \texttt{rank}($\mathcal{S}$, $c$, $j$) finds the number
  of occurrences of character $c$ in $\mathcal{S}[0..j]$;
  \texttt{select}($\mathcal{S}$, $c$, $i$) finds the position of the
  $i$-th occurrence of character $c$ in $\mathcal{S}$; and
  \texttt{access}($\mathcal{S}$, $j$) returns the character at
  position $j$ in $\mathcal{S}$.

  \subsection{Rank-aware Compressed Data Structure for \texttt{Rank}
    and \texttt{Select}}

  The \textsc{Rank-aware Compressed Data
    Structure} \begin{LONG}supports \texttt{rank} in constant time and
    \texttt{select} in time within $O(\log\log{\rho})$ using
    $\delta\log{\rho} + 3n + o(\delta\log{\rho}+n)$
    bits. It \end{LONG}contains three structures $\mathcal{A}$,
  $\mathcal{B}$ and $\mathcal{C}$ representing bitvectors of size $n$
  supporting for
  $\mathcal{V} \in \{\mathcal{A}, \mathcal{B}, \mathcal{C}\}$,
  \texttt{rank}$_1$($\mathcal{V}$, $j$) and
  \texttt{select}$_1$($\mathcal{V}$, $i$) in constant time using
  $n + o(n)$ bits each~\cite{1996-Thesis-CompactPatTrees-Clark}. It
  contains a data structure $\mathcal{S}$ representing a sequence of
  length $\delta$ from an alphabet of size $\rho$ supporting
  \texttt{rank}($\mathcal{S}$, $c$, $j$) in time within
  $O(\log\log{\rho})$, \texttt{access}($\mathcal{S}$, $j$) in time
  within $O(\log\log{\rho})$, and \texttt{select}($\mathcal{S}$, $c$,
  $i$) in constant time, using $\delta\log\rho + o(\delta\log{\rho})$
  bits~\cite{2006-SODA-RankSelectOperationsOnLargeAlphabetsAToolForText-GolynskiMunroRao}. Given
  the blocks $g_1, \dots, g_\delta$ in sorted order, $\mathcal{A}$
  contains the information of the lengths of these blocks in this
  order in a bitvector of length $n$ with a 1 marking the position
  where each block starts. $\mathcal{B}$ contains the information of
  the lengths of the blocks similar to $\mathcal{A}$ but with the
  blocks maintaining the original order, such that all blocks
  belonging to the same run are consecutive. $\mathcal{C}$ contains
  the information of the length of the runs in a bitvector of length
  $n$ with a 1 marking the position where each run starts. The
  structure $\mathcal{S}$ contains the run to which $g$ belongs for
  each block $g$ in sorted order.

\begin{theorem}
  Let $\mathcal{M}$ be a multiset formed by $\rho$ runs and $\delta$
  blocks. The \textsc{Rank-aware Compressed Data Structure} represents
  $\mathcal{M}$ in $\delta\log{\rho} + 3n + o(\delta\log{\rho}+n)$
  bits supporting each \texttt{rank} query in constant time and each
  \texttt{select} query in time within $O(\log\log{\rho})$.
\end{theorem}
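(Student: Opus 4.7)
The plan is to verify the space bound by summing the sizes of the four component structures, and then to describe how each query reduces to a constant number of primitive queries on those components, with the claimed time complexities following from the known bounds cited in the construction.

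For the space bound, the three bitvectors $\mathcal{A}, \mathcal{B}, \mathcal{C}$ each occupy $n + o(n)$ bits by~\cite{1996-Thesis-CompactPatTrees-Clark}, contributing $3n + o(n)$ in total; the sequence $\mathcal{S}$ of length $\delta$ over an alphabet of size $\rho$ occupies $\delta\log\rho + o(\delta\log\rho)$ bits by~\cite{2006-SODA-RankSelectOperationsOnLargeAlphabetsAToolForText-GolynskiMunroRao}. Summing yields $\delta\log\rho + 3n + o(\delta\log\rho+n)$ bits as claimed.

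For a \texttt{rank} query on a position $i$ of $\mathcal{M}$, I would first identify the run $r$ containing $i$ via \texttt{rank}$_1$($\mathcal{C}$, $i$) and its starting position via a \texttt{select}$_1$ on $\mathcal{C}$; then use $\mathcal{B}$ to locate the block (in original order) that contains $i$, compute the offset of $i$ within that block, and determine $k$, the index of this block within its run (by subtracting the index of the first block of $r$ in $\mathcal{B}$); apply \texttt{select}($\mathcal{S}$, $r$, $k$) to obtain the sorted index of the block, using that $\mathcal{S}$ lists the run identifier of each block in sorted order; and finally apply \texttt{select}$_1$ on $\mathcal{A}$ and add the offset to recover the rank. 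Each primitive call takes constant time, so the whole \texttt{rank} query takes $O(1)$ time.

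For a \texttt{select} query with argument $j$, I would compute the sorted block index containing $j$ via \texttt{rank}$_1$($\mathcal{A}$, $j$) and its offset, then extract the run identifier of this block via \texttt{access} on $\mathcal{S}$ and its index within that run via \texttt{rank} on $\mathcal{S}$, and finally use $\mathcal{C}$ and $\mathcal{B}$ in constant time to map back to the corresponding position in $\mathcal{M}$. The only non-constant primitives invoked are the \texttt{access} and \texttt{rank} on $\mathcal{S}$, each costing $O(\log\log\rho)$, which yields the advertised $O(\log\log\rho)$ bound for \texttt{select}. The main obstacle will be to set up correctly the index translation between the ``original order, grouped by run'' representation encoded by $\mathcal{B}$ together with $\mathcal{C}$ and the ``sorted order'' representation encoded by $\mathcal{A}$, mediated by the sequence $\mathcal{S}$; once this correspondence is fixed, the remaining arithmetic and the time/space accounting are routine.
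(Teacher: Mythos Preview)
Your proposal is correct and follows essentially the same approach as the paper: the space accounting sums the three $n+o(n)$ bitvectors with the $\delta\log\rho+o(\delta\log\rho)$ sequence structure, and both queries are reduced to the same chain of primitive operations (\texttt{rank}$_1$/\texttt{select}$_1$ on $\mathcal{C}$ and $\mathcal{B}$ to move between positions and block indices within a run, \texttt{select}/\texttt{access}/\texttt{rank} on $\mathcal{S}$ to translate between run-local and global sorted block indices, and \texttt{rank}$_1$/\texttt{select}$_1$ on $\mathcal{A}$ to convert between sorted block indices and ranks). Your explicit handling of the within-block offset is a small clarification over the paper's version, but the decomposition and the time analysis are identical.
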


\begin{LONG}
  \begin{proof}
    To answer a query \texttt{rank}($\mathcal{M}$, $x$), the following
    operations are executed: \texttt{rank}$_1$($\mathcal{C}$, $i$)
    returns the run $r$ that contains $x$ in constant time, where $i$
    is the position of $x$ in the original order of $\mathcal{M}$;
    \texttt{select}$_1$($\mathcal{C}$, $r$) returns the position $q$
    where $r$ starts in the original order of $\mathcal{M}$ in
    constant time; \texttt{rank}$_1$($\mathcal{B}$, $i$) $-$
    \texttt{rank}$_1$($\mathcal{B}$, $q-1$) returns the position $p$
    inside of $r$ of the block $g$ that contains $x$ in constant time;
    \texttt{select}($\mathcal{S}$, $r$, $p$) returns the position $j$
    of $g$ in sorted order in constant time; and
    \texttt{select}$_1$($\mathcal{A}$, $j$) returns the \texttt{rank}
    of the first element in $g$ in constant time.

    For answering a query \texttt{select}($\mathcal{M}$, $i$), the
    following operations are executed:
    \texttt{rank}$_1$($\mathcal{A}$, $i$) returns the position $j$ of
    the block $g$ in sorted order that contains the selected element
    in constant time; \texttt{access}($\mathcal{S}$, $j$) returns the
    \emph{run} $r$ that contains the selected element in time within
    $O(\log\log\rho)$; \texttt{rank}($\mathcal{S}$, $r$, $j$) returns
    the position $p$ of $g$ inside $r$ in time within
    $O(\log\log\rho)$; and \texttt{select}$_1$($\mathcal{B}$, $p$ $+$
    \texttt{rank}$_1$($\mathcal{B}$,
    \texttt{select}$_1$($\mathcal{C}$, $r$))) returns the position
    where $g$ starts in the original order of $\mathcal{M}$ in
    constant time. \qed
  \end{proof}
\end{LONG}

We describe next a compressed data structure that represents a
multiset, taking advantage of the block representation of it, but
unlike \textsc{Rank-aware Compressed Data Structure}, the structure
supports \texttt{select} in constant time.

\subsection{Select-aware Compressed Data Structure for \texttt{Rank}
  and \texttt{Select}}

The \textsc{Select-aware Compressed Data
  Structure} \begin{LONG}supports \texttt{select} in constant time and
  \texttt{rank} in time within
  $O\left(\frac{\log{\delta}}{\log\log{\delta}}\right)$ using
  $\delta\log{\delta} + 2n + O(\delta\log\log{\delta}) + o(n)$
  bits. It \end{LONG}contains the same two structures $\mathcal{A}$
and $\mathcal{B}$ described above and a structure representing a
permutation $\pi$ of the numbers $[1..\delta]$ supporting the direct
operator $\pi()$ in constant time and the inverse operator
$\pi^{-1}()$ in time within
$O\left(\frac{\log{\delta}}{\log\log{\delta}}\right)$ using
$\delta\log{\delta} + O(\delta\log\log{\delta})$
bits~\cite{2012-TCS-SuccinctRepresentationOfPermutationsAndFunctions-MunroRamanRamanRao}. Given
the blocks $g_1, \dots, g_\delta$ in sorted order, $\pi(i)$ returns
the position $j$ of the block $g_i$ in the original order of
$\mathcal{M}$ and $\pi^{-1}(j) = i$ if the position of the block $g_i$
is $j$ in the original order of $\mathcal{M}$.

\begin{theorem}
  Let $\mathcal{M}$ be a multiset formed by $\rho$ runs and $\delta$
  blocks. The \textsc{Select-aware Compressed Data Structure}
  represents $\mathcal{M}$ in
  $\delta\log{\delta} + 2n + O(\delta\log\log{\delta}) + o(n)$ bits
  supporting each \texttt{select} query in constant time and each
  \texttt{rank} query in time within
  $O\left(\frac{\log{\delta}}{\log\log{\delta}}\right)$.
\end{theorem}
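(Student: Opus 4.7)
The plan is to verify each of the three claims (space, constant-time \texttt{select}, and $O(\log\delta/\log\log\delta)$-time \texttt{rank}) separately, following the same pattern as the proof sketched for the \textsc{Rank-aware Compressed Data Structure} but substituting the permutation $\pi$ for the role played there by the sequence $\mathcal{S}$ together with the bitvector $\mathcal{C}$.

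First, I would dispatch the space bound by simply summing the sizes of the three components: $\mathcal{A}$ and $\mathcal{B}$ each contribute $n+o(n)$ bits via the result of Clark~\cite{1996-Thesis-CompactPatTrees-Clark}, and the succinct permutation representation of Munro et al.~\cite{2012-TCS-SuccinctRepresentationOfPermutationsAndFunctions-MunroRamanRamanRao} contributes $\delta\log\delta + O(\delta\log\log\delta)$ bits, matching the claimed $\delta\log\delta + 2n + O(\delta\log\log\delta) + o(n)$.

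Next I would analyze \texttt{select}($\mathcal{M}$, $i$) and show that only constant-time primitives are invoked. The idea is to first locate, by a single constant-time \texttt{rank}$_1(\mathcal{A}, i)$, the index $j$ (in sorted order) of the block $g_j$ containing the answer; then use the direct operator $\pi(j)$, again in constant time, to obtain the index $p$ of $g_j$ in the original order; then use \texttt{select}$_1(\mathcal{B}, p)$ to obtain the starting position of $g_j$ in the original order of $\mathcal{M}$; and finally add the within-block offset $i - \texttt{select}_1(\mathcal{A}, j)$. All four operations are constant-time, which establishes the claim.

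Finally, I would analyze \texttt{rank}($\mathcal{M}$, $x$), where $x$ is given by its position $i$ in the original order. The idea is symmetric: compute $p = \texttt{rank}_1(\mathcal{B}, i)$ in constant time to get the original-order index of the block containing $x$; apply the inverse $\pi^{-1}(p)$, which costs $O(\log\delta/\log\log\delta)$, to obtain the sorted-order index $j$ of the same block; use \texttt{select}$_1(\mathcal{A}, j)$ to obtain the rank of the first element of that block; and add the within-block offset $i - \texttt{select}_1(\mathcal{B}, p)$. The dominating term is the single call to $\pi^{-1}$, giving the stated bound. The only real obstacle is bookkeeping: one must carefully distinguish between block indices in sorted order (used by $\mathcal{A}$ and by the domain of $\pi$) and block indices in original order (used by $\mathcal{B}$ and by the range of $\pi$), and verify that the within-block offset is the same whether measured via $\mathcal{A}$ or via $\mathcal{B}$, which follows because $\pi$ is a block-level permutation that preserves the lengths of corresponding blocks.
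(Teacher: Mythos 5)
Your proposal matches the paper's own proof essentially step for step: the same three-way decomposition of the space bound, the same $\texttt{rank}_1(\mathcal{A},\cdot) \to \pi(\cdot) \to \texttt{select}_1(\mathcal{B},\cdot)$ chain for \texttt{select}, and the same $\texttt{rank}_1(\mathcal{B},\cdot) \to \pi^{-1}(\cdot) \to \texttt{select}_1(\mathcal{A},\cdot)$ chain for \texttt{rank}, with the single non-constant step isolated in $\pi^{-1}$. You go slightly further than the paper by spelling out the within-block offset arithmetic and by noting the length-preservation point needed to justify it, which is a welcome precision rather than a departure.
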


\begin{LONG}
  \begin{proof}
    To answer a query \texttt{select}($\mathcal{M}$, $i$), the
    following operations are executed:
    \texttt{rank}$_1$($\mathcal{A}$, $i$) returns the position $j$ of
    the block $g_j$ in sorted order that contains the selected element
    in constant time; $\pi(j)$ returns the position $p$ of $g_j$ in
    the original order of $\mathcal{M}$ in constant time; and
    \texttt{select}$_1$($\mathcal{B}$, $p$) returns the position where
    $g_j$ starts in $\mathcal{M}$ in constant time.

    To answer a query \texttt{rank}($\mathcal{M}$, $x$), the following
    operations are executed: \texttt{rank}$_1$($\mathcal{B}$, $i$)
    returns the position $j$ of the block $g$ that contains $x$ in
    constant time, where $i$ is the position of $x$ in the original
    order of $\mathcal{M}$; $\pi^{-1}(j)$ returns the position $p$ of
    $g$ in sorted order in time within
    $O\left(\frac{\log{\delta}}{\log\log{\delta}}\right)$; and
    \texttt{select}$_1$($\mathcal{A}$, $p$) returns the \texttt{rank}
    of the first element of $g$ in constant time.\qed
  \end{proof}
\end{LONG}\end{VLONG}This concludes the description of our synergistic results. In the
next section we discuss how these results relate to various past
results and future work.

\section{Discussion}
\label{sec:discussion}

\begin{VLONG}
In the context of deferred data structure, the concept of runs was introduced previously in~\cite{2005-ICALP-TowardsOptimalMultopleSelection-KaligosiMehlhornMunroSanders,2016-JDA-NearOptimalOnlineMultiselectionInInternalAndExternalMemory-BarbayGuptaRaoSorenson}, but for a different purpose than the refined analysis of the complexity presented in this work: we clarify the difference and the research perspectives that it suggests in Section~\ref{sec:previousWork}, and other perspectives for future research in Section~\ref{sec:futureWork}.\begin{LONG} At a higher cognition level, we discuss the importance of categorizing techniques of multivariate analysis of algorithms in Section~\ref{sec:concepts}.\end{LONG}
\end{VLONG}

\begin{VLONG}
\subsection{Comparison with previous work}
\label{sec:previousWork}
\end{VLONG}

%% Comparison with ESA2013
Kaligosi et al.'s multiselection
algorithm~\cite{2005-ICALP-TowardsOptimalMultopleSelection-KaligosiMehlhornMunroSanders}
and Barbay et al's deferred data
structure~\cite{2016-JDA-NearOptimalOnlineMultiselectionInInternalAndExternalMemory-BarbayGuptaRaoSorenson}
use the very same concept of \emph{runs} as the one described in this
work.
The difference is that whereas we describe algorithms which
\emph{detect} the existing runs in the input in order to take
advantage of them, the algorithms described by those previous works do
not take into consideration any pre-existing runs in the input
(assuming that there are none) and rather build and maintain such runs
as a strategy to minimize the number of comparisons performed while
partially sorting the multiset.
We leave the combination of both approaches as a topic for future work
which could probably shave a constant factor off the number
of comparisons performed by the \textsc{Sorting} and
\textsc{MultiSelection} algorithms and by the \textsc{Deferred Data Structure}s supporting \texttt{rank} and \texttt{select} on \textsc{Multisets}.
  % 
%
  %% Comparison with Johnson and Frederickson for $q=1$
\begin{VLONG}
  Johnson and
  Frederickson~\cite{1980-STOC-GeneralizedSelectionAndRanking-FredericksonJohnson}
  described an algorithm answering a single \texttt{select} query in a
  set of sorted arrays of sizes $r_1, r_2, \dots, r_\rho$, in time
  within $O(\sum^{\rho}_{i=1}\log{r_i})$.
  Using their algorithm on pre-existing runs outperforms the Deferred
  Data Structures described in Section~\ref{sec:online-synergy-defer},
  when there is a single query. Yet it is not clear how to generalize
  their algorithm into a deferred data structure in order to support
  more than one query.
  \begin{TODOSRINIVAS}
    GIVE AN INTUITION of WHY here
  \end{TODOSRINIVAS}
  The difference is somehow negligible as the cost of such a query is
  anyway dominated by the cost ($n-1$ comparisons) of partitioning the
  input into runs.
  \begin{LONG}
    We leave the generalization of Johnson and Frederickson's
    algorithm into a deferred data structure which optimally supports
    more than one query as an open problem.
  \end{LONG}\end{VLONG}
  \begin{VLONG}
  We describe additional perspectives for future research in the next section.

  \subsection{Perspectives for future research}
  \label{sec:futureWork}
  \end{VLONG}
%% Practical Evaluation to assert the frequency of real instances
%% where synergetical approach makes a difference

\begin{LONG}
  One question to tackle is to see how frequent are ``easy'' instances
  in concrete applications, in terms of input order and structure, and
  in terms of query order and structure; and how much advantage can be
  taken of them.
\end{LONG}

%% Compressed Data Structures for Rank and Select

Barbay and Navarro~\cite{2013-TCS-OnCompressingPermutationsAndAdaptiveSorting-BarbayNavarro} described
\begin{LONG}
  how sorting algorithms in the comparison model directly imply
  encodings for permutations, and in particular
\end{LONG}
how sorting algorithms taking advantage of specificities of the input
\begin{SHORT}
  directly
\end{SHORT}
imply compressed encodings of
\begin{LONG}
  such
\end{LONG}
permutations. By using the similarity of the execution tree of the algorithm \texttt{MergeSort} with the
\begin{LONG}
  well known
\end{LONG}
\texttt{Wavelet Tree} data structure, they described a compressed data structure for permutations taking advantage of local order, i.e., using space proportional to $\mathcal{H}(r_1,\ldots,r_\rho)$ and supporting \texttt{direct access} (i.e. $\pi()$) and \texttt{inverse access} (i.e. $\pi^{-1}()$) in worst time within $O(1+\lg\rho)$ and average time within $O(1+\mathcal{H}(r_1,\ldots,r_\rho))$. We leave the definition of a compressed data structure for multisets taking additional advantage of its structure and global order as future work.

%% Deferred Data Structure for Maxima and Convex Hull
% Synergy with Instance Optimal for Maxima and Convex Hull

Another perspective is to generalize the synergistic results to
related problems in computational geometry: Karp et
  al.~\cite{1988-SIAM-DeferredDataStructuring-KarpMotwaniRaghavan}
  defined the first deferred data structure not only to support
  \texttt{rank} and \texttt{select} queries on multisets, but also to
  support online queries in a deferred way on \textsc{Convex Hull} in
  two dimensions and online \textsc{Maxima} queries on sets of
  multi-dimensional vectors.
One could refine the results from Karp et
al.~\cite{1988-SIAM-DeferredDataStructuring-KarpMotwaniRaghavan}\begin{LONG},
  expressed in function of the number of queries,\end{LONG} to take
into account the blocks between each queries (i.e., the structure in
the queries) as Barbay~et
al.~\cite{2016-JDA-NearOptimalOnlineMultiselectionInInternalAndExternalMemory-BarbayGuptaRaoSorenson}
did for multisets; but also for the relative position of the points
(i.e., the structure in the data) as Afshani~et
al.~\cite{2009-FOCS-InstanceOptimalGeometricAlgorithms-AfshaniBarbayChan}
did for Convex Hulls and Maxima; the order in the points (i.e., the
order in the data), as computing the convex hull in two dimension
takes linear time if the points are sorted; and potentially the order
in the queries.

\begin{LONG}
  \subsection{Importance of the Parameterization of Structure and
    Order}
  \label{sec:concepts}

  %% Rise of adaptive algorithm, Parameterized complexity, fine
  %% grained analysis

  The computational complexity of most problems is studied in the
  worst case over instances of fixed size $n$, for $n$ asymptotically
  tending to infinity. This approach was refined for NP-difficult
  problems under the term ``parameterized
  complexity''~\cite{2006-BOOK-ParameterizedComplexityTheory-FlumGrohe},
  for polynomial problems under the term ``Adaptive
  Algorithms''~\cite{1992-ACMCS-ASurveyOfAdaptiveSortingAlgorithms-EstivillCastroWood,1992-ACJ-AnOverviewOfAdaptiveSorting-MoffatPetersson},
  and more simply for data encodings under the term of ``Data
  Compression''~\cite{2013-TCS-OnCompressingPermutationsAndAdaptiveSorting-BarbayNavarro},
  for a wide range of problems and data types.
%
  %% Many distinct analysis $\implies$ need to classify them (if only
  %% to teach them)
%
  Such a variety of results has motivated various tentative to
  classify them, in the context of NP-hard problems with a theory of
  Fixed Parameter
  Tractability~\cite{2006-BOOK-ParameterizedComplexityTheory-FlumGrohe},
  and in the context of sorting in the comparison model with a theory
  of reduction between
  parameters~\cite{1995-DAM-AFrameworkForAdaptiveSorting-PeterssonMoffat}.
%
  %% We present a classification
%
  We introduced two other perspectives from which to classify
  algorithms and data structures.
  Through the study of the sorting of multisets according to the
  potential ``easiness'' in both the order and the values in the
  multiset, we aimed to introduce a way to classify refined techniques
  of complexity analysis between the ones considering the input order
  and the ones considering the structure in the input; and to show an
  example of the difficulty of combining both into a single hybrid
  algorithmic technique.
  Through the study of the online support of \texttt{rank} and
  \texttt{select} queries on multisets according to the potential
  ``easiness'' in both the order and the values in the queries
  themselves (in addition to the potential easiness in the data being
  queried), we aimed to introduce
% 
  %% Categorizing fine grained analysis between data order and
  %% structure
  % - Synergetic solutions
%
  such categorizations. We predict that such analysis techniques will
  take on more importance in the future, along with the growth of the
  block between practical cases and the worst case over instances of
  fixed sizes. Furthermore, we conjecture that synergistic techniques
  taking advantage of more than one ``easiness'' aspect will be of
  practical importance if the block between theoretical analysis and
  practice is to ever be reduced.
\end{LONG}

%%% Local Variables:
%%% eval: (highlight-regexp "\\\\begin{LONG}" (quote hi-green))
%%% eval: (highlight-regexp "\\\\end{LONG}" (quote hi-green))
%%% eval: (highlight-regexp "\\\\begin{SHORT}" (quote hi-red-b))
%%% eval: (highlight-regexp "\\\\end{SHORT}" (quote hi-red-b))
%%% eval: (highlight-regexp "\\\\begin{TODO}" (quote hi-black-b))
%%% eval: (highlight-regexp "\\\\end{TODO}" (quote hi-black-b))
%%% TeX-master: "2016-ARXIV-SynergyDeferredDataStructure-BarbayOchoaRao.tex"
%%% End:

%%%%%%%%  BIBLIO  %%%%%%%%%%%%
\bibliographystyle{splncs}
\bibliography{addedForThePaper,bibliographyDatabaseJeremyBarbay,publicationsJeremyBarbay}

\begin{thebibliography}{10}
\providecommand{\url}[1]{\texttt{#1}}
\providecommand{\urlprefix}{URL }

\bibitem{2009-FOCS-InstanceOptimalGeometricAlgorithms-AfshaniBarbayChan}
Afshani, P., Barbay, J., Chan, T.M.: Instance-optimal geometric algorithms. In:
  Proceedings of the Annual IEEE Symposium on Foundations of Computer Science
  (FOCS). pp. 129--138. IEEE Computer Society (2009)

\bibitem{2016-JDA-NearOptimalOnlineMultiselectionInInternalAndExternalMemory-BarbayGuptaRaoSorenson}
Barbay, J., Gupta, A., Satti, S.R., Sorenson, J.: Near-optimal online
  multiselection in internal and external memory. Journal of Discrete
  Algorithms (JDA)  36,  3--17 (2016)

\bibitem{2013-TCS-OnCompressingPermutationsAndAdaptiveSorting-BarbayNavarro}
Barbay, J., Navarro, G.: On compressing permutations and adaptive sorting.
  Theoretical Computer Science (TCS)  513,  109--123 (2013)

\bibitem{2002-JCSS-OptimalBoundsForThePredecessorProblemAndRelatedProblems-BeameFich}
Beame, P., Fich, F.E.: Optimal bounds for the predecessor problem and related
  problems. Journal of Computer and System Sciences (JCSS)  65(1),  38 -- 72
  (2002)

\bibitem{1976-IPL-AnAlmostOptimalAlgorithmForUnboundedSearching-BentleyYao}
Bentley, J.L., Yao, A.C.C.: An almost optimal algorithm for unbounded
  searching. Information Processing Letters (IPL)  5(3),  82--87 (1976)

\bibitem{1973-JCSS-TimeBoundsForSelection-BlubFloydPrattRivestTarjan}
Blum, M., Floyd, R.W., Pratt, V.R., Rivest, R.L., Tarjan, R.E.: Time bounds for
  selection. Journal of Computational System Science (JCSS)  7(4),  448--461
  (1973)

\bibitem{1998-SODA-FingerSearchTreesWithConstantInsertionTime-Brodal}
Brodal, G.S.: Finger search trees with constant insertion time. In: Proceedings
  of the ninth annual ACM-SIAM symposium on Discrete algorithms (SODA). pp.
  540--549. Society for Industrial and Applied Mathematics (1998)

\bibitem{1996-Thesis-CompactPatTrees-Clark}
Clark, D.R.: Compact Pat Trees. Ph.D. thesis, University of Waterloo (1996)

\bibitem{2000-SODA-AdaptiveSetIntersectionsUnionsAndDifferences-DemaineLopezOrtizMunro}
Demaine, E.D., L{\'o}pez-Ortiz, A., Munro, J.I.: Adaptive set intersections,
  unions, and differences. In: Proceedings of the $11^{th}$ ACM-SIAM Symposium
  on Discrete Algorithms ({SODA}). pp. 743--752 (2000)

\bibitem{1981-JACM-OptimalTimeMinimalSpaceSelectionAlgorithms-DobkinMunro}
Dobkin, D.P., Munro, J.I.: Optimal time minimal space selection algorithms.
  Journal of the ACM (JACM)  28(3),  454--461 (1981)

\bibitem{1992-ACMCS-ASurveyOfAdaptiveSortingAlgorithms-EstivillCastroWood}
Estivill-Castro, V., Wood, D.: A survey of adaptive sorting algorithms. ACM
  Computing Surveys (ACMCS)  24(4),  441--476 (1992)

\bibitem{2006-BOOK-ParameterizedComplexityTheory-FlumGrohe}
Flum, J., Grohe, M.: Parameterized Complexity Theory (Texts in Theoretical
  Computer Science. An EATCS Series). Springer-Verlag New York, Inc., Secaucus,
  NJ, USA (2006)

\bibitem{1980-STOC-GeneralizedSelectionAndRanking-FredericksonJohnson}
Frederickson, G.N., Johnson, D.B.: Generalized selection and ranking. In:
  Proceedings of the 12th Annual ACM Symposium on Theory of Computing (STOC),
  April 28-30, 1980, Los Angeles, California, {USA}. pp. 420--428 (1980)

\bibitem{2006-SODA-RankSelectOperationsOnLargeAlphabetsAToolForText-GolynskiMunroRao}
Golynski, A., Munro, J.I., Rao, S.S.: Rank/select operations on large
  alphabets: A tool for text indexing. In: Proceedings of the Seventeenth
  Annual ACM-SIAM Symposium on Discrete Algorithm (SODA). pp. 368--373. SODA
  '06, Society for Industrial and Applied Mathematics, Philadelphia, PA, USA
  (2006)

\bibitem{1977-STOC-ANewRepresentationForLinearLists-GuibasMcCreightPlassRoberts}
Guibas, L.J., McCreight, E.M., Plass, M.F., Roberts, J.R.: A new representation
  for linear lists. In: Proceedings of the ninth annual ACM symposium on Theory
  of computing (STOC). pp. 49--60. ACM Press, New York, NY, USA (1977)

\bibitem{1961-CACM-Quicksort-Hoare}
Hoare, C.A.R.: Algorithm 64: Quicksort. Communication of the ACM (CACM)  4(7),
  321 (1961)

\bibitem{1961-CACM-Quickselect-Hoare}
Hoare, C.A.R.: Algorithm 65: Find. Communication of the ACM (CACM)  4(7),
  321--322 (1961)

\bibitem{1952-IRE-AMethodForTheInstructionOfMinimumRedundancyCodes-Huffman}
Huffman, D.A.: A method for the construction of minimum-redundancy codes.
  Proceedings of the Institute of Radio Engineers (IRE)  40(9),  1098--1101
  (September 1952)

\bibitem{2005-ICALP-TowardsOptimalMultopleSelection-KaligosiMehlhornMunroSanders}
Kaligosi, K., Mehlhorn, K., Munro, J.I., Sanders, P.: Towards optimal multiple
  selection. In: Proceedings of the International Conference on Automata,
  Languages, and Programming (ICALP). pp. 103--114 (2005)

\bibitem{1988-SIAM-DeferredDataStructuring-KarpMotwaniRaghavan}
Karp, R.M., Motwani, R., Raghavan, P.: Deferred data structuring. SIAM Journal
  on Computing (SICOMP)  17(5),  883--902 (1988)

\bibitem{1973-BOOK-TheArtOfComputerProgrammingVol3-Knuth}
Knuth, D.E.: The Art of Computer Programming, Vol 3, chap. Sorting and
  Searching, Section 5.3. Addison-Wesley (1973)

\bibitem{1985-TCom-MeasuresOfPresortednessAndOptimalSortingAlgorithms-Mannila}
Mannila, H.: Measures of presortedness and optimal sorting algorithms. IEEE
  Trans. Computers  34(4),  318--325 (1985)

\bibitem{1992-ACJ-AnOverviewOfAdaptiveSorting-MoffatPetersson}
Moffat, A., Petersson, O.: An overview of adaptive sorting. Australian Computer
  Journal (ACJ)  24(2),  70--77 (1992)

\bibitem{2012-TCS-SuccinctRepresentationOfPermutationsAndFunctions-MunroRamanRamanRao}
Munro, J.I., Raman, R., Raman, V., S., S.R.: Succinct representations of
  permutations and functions. Theoretical Computer Science (TCS)  438,  74 --
  88 (2012)

\bibitem{1976-JComp-SortingAndSearchingInMultisets-MunroSpira}
Munro, J.I., Spira, P.M.: Sorting and searching in multisets. SIAM Journal on
  Computing (SICOMP)  5(1),  1--8 (1976)

\bibitem{1995-DAM-AFrameworkForAdaptiveSorting-PeterssonMoffat}
Petersson, O., Moffat, A.: A framework for adaptive sorting. Discrete Applied
  Mathematics (DAM)  59,  153--179 (1995)

\bibitem{1997-TR-MinimalMergesort-Takaoka}
Takaoka, T.: Minimal mergesort. Tech. rep., University of Canterbury (1997),
  http://ir.canterbury.ac.nz/handle/10092/9676, last accessed [2016-08-23 Tue]

\bibitem{2009-Chapter-PartialSolutionAndEntropy-Takaoka}
Takaoka, T.: Partial solution and entropy. In: Kr{\'a}lovi{\v{c}}, R.,
  Niwi{\'{n}}ski, D. (eds.) Mathematical Foundations of Computer Science (MFCS)
  2009: 34th International Symposium, Novy Smokovec, High Tatras, Slovakia,
  August 24-28, 2009. Proceedings. pp. 700--711. Springer Berlin Heidelberg,
  Berlin, Heidelberg (2009)

\end{thebibliography}

\end{document}